\documentclass[review]{elsarticle}

\usepackage{lineno,hyperref}
\modulolinenumbers[5]

\journal{Journal of \LaTeX\ Templates}

\bibliographystyle{elsarticle-num}
%%%%%%%%%%%%%%%%%%%%%%%
%packages and macros 

%\usepackage{cite}
\usepackage{fullpage}
\usepackage[dvipsnames,usenames]{color}
%\usepackage[colorlinks=true,urlcolor=Blue,citecolor=Green,linkcolor=BrickRed]
%\usepackage[usenames,dvipsnames]{xcolor}
%\urlstyle{same}

\sloppy
\usepackage{url}
\usepackage{epsfig}
\usepackage{epstopdf}
\usepackage{graphicx}
\DeclareGraphicsExtensions{.pdf,.png,.jpg,.eps}
\usepackage{amsthm}
\usepackage{latexsym}
\usepackage{amsfonts}
\usepackage{amssymb}
\usepackage{amsmath}
\usepackage{bm}
\usepackage{pseudocode}
\usepackage[ruled,linesnumbered]{algorithm2e}
\usepackage{array}
\usepackage{color}
\usepackage{multirow}
\usepackage{caption} 
\usepackage{tikz}
\captionsetup[table]{skip=10pt}
\usepackage{algpseudocode}

\newtheorem*{theorem*}{Theorem}
\newtheorem*{lemma*}{Lemma}
\newtheorem{lemma}{Lemma}
\newtheorem{theorem}{Theorem}
\usepackage{color}
\usepackage{xcolor}
\usepackage{listings}
\usepackage{framed}
\usepackage{caption,graphicx,newfloat}
\DeclareCaptionFont{white}{\color{white}}
\DeclareCaptionFormat{listing}{\colorbox{gray}{\parbox{\textwidth}{#1#2#3}}}
\captionsetup[lstlisting]{format=listing,labelfont=white,textfont=white}
\lstdefinestyle{framed}
{
     frame=lrb,         
     belowcaptionskip=-1pt,
     xleftmargin=8pt,
     framexleftmargin=8pt,
     framexrightmargin=5pt,
     framextopmargin=5pt,
     framexbottommargin=5pt,
     framesep=0pt,
     rulesep=0pt,
 }
\newcommand{\MS}[1]{\ensuremath{\Delta}}

\newtheorem{corollary}{Corollary}[theorem]

\newcommand{\topk}{{\textsf{Top-$k$}}}
\newcommand{\topt}{{\textsf{Top-$2$}}}
\newcommand{\topm}{{\textsf{Top-$(k-1)$}}}
\newcommand{\tops}{{\textsf{Top-$(k-2)$}}}

\newcommand{\RMQ}{{\textsf{RMQ}}}

\newcommand\ceil[1]{\lceil #1 \rceil}
\newcommand\floor[1]{\lfloor #1 \rfloor}

\newtheorem{proposition}{Proposition}
%%%%%%%%%%%%%%%%%%%%%%%%
\sloppy
\begin{document}

\begin{frontmatter}

\title{Encoding two-dimensional range top-$k$ queries\tnoteref{mytitlenote}}
\tnotetext[mytitlenote]{Preliminary version of these results have appeared in the proceedings of the 27th Annual Symposium on Combinatorial Pattern Matching (CPM 16)~\cite{DBLP:conf/cpm/JoLS16} and the 29th International Symposium on Algorithms and Computation (ISAAC 2018)~\cite{DBLP:conf/isaac/JoS18}}

%% Group authors per affiliation:
\author[Seungbum]{Seungbum Jo}
\address[Seungbum]{Chungbuk National University}
\ead{sbjo@chungbuk.ac.kr}

\author[Rahul]{Rahul Lingala}
\address[Rahul]{IIT Bombay}
\ead{lingalarahul7@gmail.com}

\author[Satti]{Srinivasa Rao Satti}
\address[Satti]{Seoul National University}
\ead{ssrao@cse.snu.ac.krl}

\begin{abstract}
We consider the problem of encoding two-dimensional arrays, whose elements come from a total order, for answering \topk{} queries. The aim is to obtain encodings that use space close to the information-theoretic lower bound, which can be constructed efficiently. For an $m \times n$ array, with $m \le n$, we first propose an encoding for answering 1-sided \topk{} queries, whose query range is restricted to $[1 \dots m][1 \dots a]$, for $1 \le a \le n$. 
Next, we propose an encoding for answering for the general (4-sided) \topk{} queries that takes
$(m\lg{{(k+1)n \choose n}}+2nm(m-1)+o(n))$ bits, which generalizes the \textit{joint Cartesian tree} of Golin et al. [TCS 2016]. Compared with trivial $O(nm\lg{n})$-bit encoding, our encoding takes less space when 
$m = o(\lg{n})$. In addition to the upper bound results for the encodings, we also give lower bounds on encodings for answering $1$ and $4$-sided \topk{} queries, which show that our upper bound results are almost optimal.  

\end{abstract}

\begin{keyword}
Encoding model \sep top-$k$ query \sep range minimum query.
%\MSC[2010] 00-01\sep  99-00
\end{keyword}

\end{frontmatter}

%\linenumbers

\section{Introduction}
\label{sec:intro}
Given a one-dimensional (1D) array $A[1 \dots n]$ of $n$ elements from a total order, the \textit{range \topk{} query on $A$} (\topk{}$(i,j, A), 1 \le i,j \le n$)  returns the positions of $k$ largest values in $A[i \dots j]$. In this paper, we refer to these queries as 2-sided \topk{} queries; and the special case where the query range is $[1 \ldots i]$, for $1 \le i \le n$, as the 1-sided \topk{} queries.
We can extend the definition to the two-dimensional (2D) case -- given an $m \times n$ 2D array $A[1 \dots m][1 \dots n]$ of $mn$ elements from a total order and a $k \in \{1, \dots, mn\}$, the \textit{range \topk{} query on $A$} (\topk{}$(i,j,a,b,A), 1 \le i,j \le m , 1 \le a,b \le n$)  returns the positions of $k$ largest values in $A[i \dots j][a \dots b]$. Without loss of generality, we assume that all elements in $A$ are distinct (by ordering equal elements based on the lexicographic order of their positions). Also, we assume that $m \le n$. In this paper, we consider the following types of \topk{} queries.
\begin{itemize}
	\item {Based on the order in which the answers are reported}
	\begin{itemize}
		\item {Sorted query}: the $k$ positions are reported in sorted order of their corresponding values.
		\item {Unsorted query}: the $k$ positions are reported in an arbitrary order.
	\end{itemize}	
	\item {Based on the query range}
	\begin{itemize}
		\item {1-sided query}: the query range is $A[1 \dots m][1 \dots b]$, for $1 \le b \le n$.
		%\item {3-sided query}: the query range is $A[i \dots j][1 \dots b]$, for $i, j \in \{1,m\}$, and $b \in \{1,n\}$.
		\item {4-sided query}: the query range is $A[i \dots j][a \dots b]$, for $i, j \in \{1,m\}$, and $a, b \in \{1,n\}$.
		%the query range is an arbitrary rectangular sub-array of~$A$.
	\end{itemize}
\end{itemize}

We consider how to support these range \topk{} queries on $A$ in the \textit{encoding model}. 
In this model, one needs to construct a data structure (an encoding) so that queries can be answered by only accessing the data structure, without accessing the original input array $A$. The minimum size of such an encoding is also referred to as the \textit{effective entropy} of the input data~\cite{DBLP:journals/tcs/GolinIKRSS16}.
Our aim is to obtain encodings that use space close to the effective entropy, which can be constructed efficiently. 
In the rest of the paper, we use $\topk{}(i, j, a, b)$ to denote $\topk{}(i, j, a, b, A)$ 
if $A$ is clear from the context. Also, unless otherwise mentioned, we assume 
that all \topk{} queries are sorted, and 4-sided \topk{} queries. Finally, we assume the 
standard word-RAM model~\cite{miltersen-survey} with word size $\Theta(\lg{n})$\footnote{We use $\lg{n}$ to denote $\log_2{n}$.}. 

\subsection{Previous work.}
The problem of encoding 1D arrays to support \topk{} queries has been 
widely studied in the recent years. Especially, the case when $k=1$, which is  
commonly known as the \textit{Range maximum query} (\RMQ{}) problem, has been 
studied extensively, and has a wide range of applications~\cite{DBLP:journals/siamcomp/BerkmanV93}. 
%Optimal encodings for answering \RMQ{} queries on 1D and 2D arrays are well-studied. 
Fischer and Heun~\cite{FischerH10} proposed an optimal $2n+o(n)$-bit data structure which answers \RMQ{} queries on 1D array of size $n$ in constant time.
For a 2D array $A$ of size $m \times n$, a trivial way to encode $A$ for answering
\RMQ{} queries is to store the rank of all elements in $A$, using $O(nm\lg{n})$ bits.
Golin et al.~\cite{DBLP:journals/tcs/GolinIKRSS16} show that when $m=2$ and 
$\RMQ{}$ encodings on each row are given, one can support $\RMQ{}$ queries on $A$ 
using $n-O(\lg{n})$ extra bits by encoding a \textit{joint Cartesian tree} of the two rows.
By extending the above encoding, they obtained $nm(m+3)/2$-bit encoding for answering $\RMQ{}$ queries on $A$, which takes less space than the trivial $O(nm\lg{n})$-bit encoding when $m=o(\lg{n})$.
Brodal et al.~\cite{BDR12} proposed an $O(\min{(nm\lg{n}, m^2n)})$-bit data structure which supports $\RMQ$ queries on $A$ in constant time. 
Finally, Brodal et al.~\cite{DBLP:conf/esa/BrodalBD13} obtained an optimal $O(nm\lg{m})$-bit encoding for answering $\RMQ{}$ queries on $A$ (although the queries are not supported efficiently). 
For the case when $k=2$, Davoodi et al.~\cite{Davoodi20130131} proposed a $3.272n+o(n)$-bit 
data structure to encode a 1D array of size $n$, which supports $\topt{}$ queries in 
constant time. The space was later improved by Gawrychowski and Nicholson~\cite{DBLP:conf/icalp/GawrychowskiN15}
to the optimal $2.755n+o(n)$ bits, although it does not support queries efficiently.
%For $\topt$ queries on $2 \times n $ array $A$, Jo et al.~\cite{DBLP:conf/cpm/JoLS16} showed that $3n+o(n)$-bit extra space is enough for answering 4-sided $\topt{}$ queries on $A$, when encodings of 2-sided $\topt{}$ queries for each row are given.

For general $k$, on a 1D array of size $n$, Grossi et al.~\cite{DBLP:journals/talg/GrossiINRR17} proposed 
an $O(n\lg{k})$-bit encoding 
which supports sorted \topk{} queries in $O(k)$ time, and showed that at least $n\lg{k}-O(n)$ bits are necessary for answering
1-sided \topk{} queries; Gawrychowski 
and Nicholson~\cite{DBLP:conf/icalp/GawrychowskiN15} 
proposed a $(k+1)nH(1/(k+1))+o(n)$-bit\footnote{$H(x) = 
	x\lg{(1/x)}+(1-x)\lg{(1/(1-x))}$, i.e., an entropy of the binary string whose density of zero is $x$} 
encoding for \topk{} queries (although the queries are not supported efficiently), 
and showed that at least $(k+1)nH(1/(k+1))(1-o(1))$ bits are
required to encode \topk{} queries. 
They also proposed a  $(k+1.5)nH(1.5/(k+1.5))+o(n\lg{k})$-bit data structure for answering \topk{} queries 
in $O(k^6\lg^2{n}f(n))$ time, for any strictly increasing function $f$.
For a 2D array $A$ of size $m \times n$, one can answer $\topk{}$ queries using $O(nm\lg{n})$ bits, by storing the rank of all elements in $A$.
To support the queries efficiently on this encoding, one can use some of the existing orthogonal range reporting data structures in 3D, in which the $z$-coordinate stores the rank of the elements in $A$ (while $x$- and $y$-coordinates correspond to the positions of the elements in $A$), while reporting the points in sorted order of their ranks. However, all the known 3D orthogonal range reporting data structures use at least linear space (i.e., $O(nm\lg{n})$ bits), and take at least $O(k \cdot polylog(n))$ time to answer sorted $\topk{}$ queries. See~7\cite{DBLP:conf/birthday/He13} for details.
Also Rahul and Tao considered the data structures for answering $\topk{}$ queries in $\mathbb{R}^2$~\cite{DBLP:conf/pods/RahulT15,DBLP:conf/pods/RahulT16}; all their data structures use super-linear space (for $4$-sided queries).
To the best of our knowledge, there are no results on encodings for range $\topk{}$ queries on 2D arrays for $k > 1$. 
%Jo et al.~\cite{DBLP:conf/cpm/JoLS16} 
%recently developed the first non-trivial $\topk{}$ encodings on 2D arrays. 
%They proposed an 
%($m^2\lg{{(k+1)n \choose n}} + m\lg{m}+o(n)$)-bit encoding for sorted 4-sided \topk{} queries, 
%which takes less space than trivial $O(nm\lg{n})$-bit encoding when $n = \Omega(k^m)$. 
%They also proposed an $O(nm\lg{n})$-bit data structure which supports $\topk{}$ queries in $O(k)$ time. 

\begin{table}
	\centering
	\scalebox{0.8}{
		\begin{tabular}{c | c | c | c}
			\hline
			Dimension & Query type & Space (in bits) & Reference \\
			\hline
			\multicolumn{4}{c}{Upper bounds}\\
			\hline
			
			%&  3-sided, unsorted &  $2\lg{{(k+1)n \choose n}}+\ceil{(n-\floor{k/2})\lg{3}}+o(n)$  & Theorem~\ref{thm:2sided_unsorted}\\\cline{2-4}
% 			&   \multirow{2}{*}{3-sided, sorted} &  $2\lg{{3n \choose n}}+2n+o(n)$  &Theorem~\ref{thm:2sided_sorted0}, $k =2$\\
% 			&  &  $2\lg{{(k+1)n \choose n}}+\ceil{2n\lg{3}}+o(n)$  & Theorem~\ref{thm:2sided_sorted}\\\cline{2-4}
			\multirow{3}{*}{$2 \times n$}&  \multirow{3}{*}{4-sided, sorted} &  $5n-O(\lg{n})$  & ~\cite{DBLP:journals/tcs/GolinIKRSS16}, $k=1$\\
			&   &  $2\lg{{3n \choose n}}+3n+o(n)$  & Theorem~\ref{thm:top2_2n}, $k =2$\\
			&   &  $2\lg{{(k+1)n \choose n}}+4n+o(n)$  & Theorem~\ref{thm:2n}\\
			\hline
			\multirow{6}{*}{$m \times n$} & 1-sided, unsorted & $O(\min{\{nk\lg{(m/k)}, nm\lg{(k/m)}}\})$&Theorem~\ref{thm:one_sided_unsortedds}\\\cline{2-4}
			&1-sided, sorted&$O(\min{\{nk \lg m, nm\lg k\}})$ &Theorem~\ref{thm:one_sided_sortedupper}\\\cline{2-4}
			&\multirow{4}{*}{4-sided, sorted}  &  $O(\min{(nm\lg{n}, m^2n)})$  & ~\cite{DBLP:journals/tcs/GolinIKRSS16}, $k=1$\\ 
			
			&  &  $O(nm\lg{m})$  & ~\cite{DBLP:conf/esa/BrodalBD13}, $k=1$\\
			%&  &  $m^2\lg{{(k+1)n \choose n}} + m\lg{m}+o(n)$  &Theorem~\ref{theorem;topk_gen}\\
			&  &  $O(nm\lg{n})$  & Trivial\\ 
			&   &  $m\lg{{(k+1)n \choose n}}+2nm(m-1)+o(n)$  & Corollary~\ref{col:mn}\\

			\hline
			\multicolumn{4}{c}{Lower bounds}\\
			\hline
			\multirow{3}{*}{$2 \times n$} &  4-sided &  $5n-O(\lg{n})$  & ~\cite{DBLP:journals/tcs/GolinIKRSS16}, $k=1$ \\
			&  1(or 4)-sided, unsorted &  $1.27(n-k/2)-o(n)$  & (*) Theorem~\ref{lower:unsorted}\\
			&  4-sided, sorted &  $2n-O(\lg{n})$  & (*) Theorem~\ref{lower:sorted}\\
			\hline
			\multirow{3}{*}{$m \times n$} &1-sided, unsorted&Same as the upper bound &Theorem~\ref{thm:one_sided_unsortedds}\\
			&1-sided, sorted& Same as the upper bound &Theorem~\ref{thm:one_sided_sortedupper}\\
			&  4-sided, sorted &  $\Omega(nm\lg{(\max{}(m,k))})$  & ~\cite{BDR12,DBLP:journals/talg/GrossiINRR17}\\
			\hline
	\end{tabular}}
	\caption{Summary of the results of upper and lower bounds for \topk{} encodings on 2D arrays. The lower bound results marked (*) (of Theorem~\ref{lower:unsorted} and \ref{lower:sorted}) are for the additional space (in bits) necessary, assuming that encodings of $\topk{}$ queries for both rows are given. Results in \cite{DBLP:journals/tcs/GolinIKRSS16} support $O(1)$-time queries using $o(n)$ extra bits.} 
	\label{tab:summary}
\end{table}

\subsection{Our results.}
Given an $m \times n$ array $A$, we first show that $O(\min{\{nm \lg k, nk\lg m\}})$ bits are necessary and sufficient  for answering sorted 1-sided \topk{} queries.
%where $T = \sum_{i=0}^{\min{(m,k)}} i! {m \choose i}{k \choose i}$. 
For unsorted 1-sided queries, 
 we show that $O(\min{\{nk\lg{(m/k)}, nm\lg{(k/m)}}\})$ bits are necessary and sufficient. 
This space bound is strictly less than the 
space used to encode sorted  1-sided \topk{} queries. Thus it is interesting to note that in 2D, 
there is a gap between the space needed to encode the 1-sided \topk{} queries for the 
sorted and unsorted cases. In contrast, in 1D, the space needed to encode the 1-sided
\topk{} queries for both sorted and unsorted cases are asymptotically the same (for $k = o(n)$).
We show that such encodings can be simply constructed in $O(mn\lg{k}+nk)$ time by using a min-heap data structure.

Next, in Section~\ref{sec:4sided}, we consider encodings for 4-sided $\topk{}$ queries on an $m \times n$ array $A$.
We first observe that one can obtain an $O(mn\lg{n})$-bit data structure 
which answers 4-sided \topk{} queries on $A$ in $O(k)$ time, 
by combining the results of~\cite{BDR12} and~\cite{DBLP:conf/isaac/BrodalFGL09}. 
We then propose the alternative encoding which uses less space than the trivial $mn\ceil{\lg{(mn)}}$-bit encoding (which stores all the positions of $A$ in sorted order), for small $m$. 

To be more precise, we first show that $4n$ bits are sufficient for answering sorted 4-sided $\topk{}$ queries on $2 \times n$ array, when encodings for answering sorted 2-sided $\topk{}$ queries for each row are given. This encoding is obtained by a binary DAG for answering $\topt{}$ queries on $2 \times n$ array which  generalizes the $(5n-O(\lg{n}))$-bit encoding of $\RMQ{}$ query on $2 \times n$ array proposed by Golin et al.~\cite{DBLP:journals/tcs/GolinIKRSS16}, to general $k$; and shows that we can encode a joint Cartesian tree for general $k$ (which corresponds to the DAG in our paper) using $4n$ bits. Note that the additional space is independent of $k$. By extending the encoding on $2 \times n$ array, we obtain $(m\lg{{(k+1)n \choose n}}+2nm(m-1)+o(n))$-bit encoding for answering 4-sided \topk{} queries on $m \times n$ arrays. This improves upon the trivial $\ceil{mn\log{(mn)}}$-bit encoding when $m = o(\lg{n})$, and also generalizes the $mn(m+3)/2$-bit encoding~\cite{DBLP:journals/tcs/GolinIKRSS16} for answering $\RMQ{}$ queries.
The trivial encoding of the input array takes $O(nm\lg{n})$ bits, whereas one can easily show a lower bound of $\Omega(nm \lg{(\max{(m,k)})})$ bits for any encoding of an $m \times n$ array that supports \topk{} queries since at least $O(nm\lg{m})$ bits are necessary for answering $\RMQ{}$ queries~\cite{BDR12}, and at least $n\lg{k}$ bits are necessary for answering $\topk{}$ queries for each row~\cite{DBLP:journals/talg/GrossiINRR17}. Thus, there is only a small range of parameters where a strict improvement over the trivial encoding is possible. Our result closes this gap partially, achieving a strict improvement when $m = o(\lg n)$.

In Section~\ref{sec:2nds}, we also obtain a data structure for answering $\topk{}$ queries in $O(k^2+kT(n,k))$ time using $2S(n,k)+(4k+7)n+ko(n)$ bits, if there exists an $S(n, k)$-bit encoding to answer sorted 2-sided $\topk{}$ queries on a 1D array of size $n$ in $T(n,k)$ time. 
Comparing to the $2S(n,k)+4n+o(n)$-bit encoding, 
this data structure uses more space but supports \topk{} queries efficiently (the $2S(n,k)+4n+o(n)$-bit encoding takes $O(k^2n^2+nkT(n,k))$ time for answering $\topk{}$ queries).

Finally, in Section~\ref{sec:lower}, 
given a $2 \times n$ array $A$, 
we consider the lower bound on additional space required
to answer unsorted (or sorted) $\topk{}$ on $A$ when encodings of \topk{} query for each row are given. 
We show that at least $1.27(n-k/2)-o(n)$ additional bits are necessary for answering unsorted 1-sided \topk{} queries on $A$, when encodings of unsorted 1 or 2-sided \topk{} query for each row are given. Note that this lower bound also gives the lower bound for answering 
unsorted 4-sided $\topk{}$ queries on $2 \times n$ array under the same condition. 
We also show that $2n-O(\lg{n})$ additional bits are necessary for answering sorted 4-sided \topk{} queries on $A$, when encodings of unsorted (or sorted) 2-sided \topk{} query for each row are given.
These lower bound results imply that our encodings in \ref{sec:4sided} are close to optimal (i.e., within $O(n)$ bits of the lower bound), since any \topk{} encoding for the array $A$ also needs to support the \topk{} queries on the individual rows.
All these results are summarized in Table~\ref{tab:summary}.
\section{Encoding 1-sided range \topk{} queries on two dimensional array}
\label{sec:1sided}
In this section, we consider the encoding of sorted and unsorted 1-sided \topk{} queries on a 2D array $A[1 \dots m][1 \dots n]$.
In the rest of the paper, we use $(i,j)$ to denote the position in the $i$-th row and $j$-th column of a 2D array.
\subsection{Sorted 1-sided queries}

We first introduce an encoding by simply extending the encoding of
sorted 1-sided \topk{} queries for 1D array
proposed by Grossi et al.~\cite{DBLP:conf/esa/GrossiINRR13}.
Next we propose an optimal encoding for sorted 1-sided \topk{} queries on $A$.
For a 1D array $A'[1 \dots n]$, define another 1D array $X[1 \dots n]$ as follows.
For $1 \le i \le k$, define $X[i] = i$. For $k < i \le n$, $X[i] = X[i']$
if there exists a position $i'$ which satisfies $\topk{}(1, i, A') = \topk{}(1, i-1, A) \setminus \{i'\} \cup \{i\}$, and $X[i] = k+1$ otherwise.

Then one can answer the \topk{}$(1, i, A')$
by finding the rightmost occurrence of every element $1 \dots k$ in $X[1 \dots i]$.
By representing $X$ (along with some additional auxiliary structures) using $n\lg{k}+O(n)$ bits,
Grossi et al.~\cite{DBLP:conf/esa/GrossiINRR13} obtained an encoding which
supports 1-sided \topk{} queries on $A'$ in $O(k)$ time.
For a 2D array $A$, one can encode $A$ to support sorted 1-sided \topk{} queries
by writing down the values of $A$ in column-major order into a 1D array,
and using the encoding described above -- resulting in the following encoding.

\begin{proposition}%[\cite{DBLP:conf/esa/GrossiINRR13}]
\label{prop;one_sided}
A 2D array $A[1 \dots m][1 \dots n]$ can be encoded using
$mn\lg{k} + O(n)$ bits to support sorted 1-sided \topk{} queries in $O(k)$ time. 
\end{proposition}

Now we describe an optimal encoding of $A$ which supports sorted 1-sided \topk{} queries.
For 1D array $A'[1 \dots n]$, we can define another 1D array $B'[1 \dots n]$ such that
for $1 \le i \le n$, $B'[i] = l$ if $A'[i]$ is the $l$-th largest element in $A'[1 \dots i]$ with $l \le k$, and  $B'[i] = k+1$ otherwise.
Then we answer the \topk{}$(1, i, A')$ query as follows. We first find the rightmost position $p_1 \le i$ such that $B'[p_1] \le k$.
Then we find the rest of $k-1$ positions $p_{k} < p_{k-1} < \dots < p_2$ such that for $2 \le j \le k$, $p_j$ is the rightmost position in 
$A'[1 \dots p_{j-1}-1]$ with $B'[p_j] \le k-j+1$. Finally, we return the positions $p_1, p_2, \dots, p_k$. 
Therefore by storing $B'$ using $n\ceil{\lg{(k+1)}}$ bits, we can answer the sorted 1-sided \topk{} queries on $A'$.
Also we can sort $A'[p_1], \dots, A'[p_k]$ using the property that for $1 \le b <  a \le k$, 
$A'[p_a] < A'[p_b]$ if and only if one of the following two conditions hold: (i) $B'[p_a] \ge B'[p_b]$, or 
(ii) $B'[p_a] < B'[p_b]$ and there exist at least $q = B'[p_b] - B'[p_a]$
positions $j_1, j_2, \dots, j_q$ between $p_a$ and $p_b$ where $B'[j_r] \le B'[p_a]$ for all $1 \le r \le q$.

We can extend this encoding for the sorted 1-sided \topk{} queries on a 2D array~$A$, to obtain an optimal encoding as stated in the following theorem.
\begin{theorem}
\label{thm:one_sided_sortedupper}
Given a 2D array $A[1 \dots m][1 \dots n]$, there is an encoding of size
%$n\lg{T}$ bits, 
%where $T = \sum_{i=0}^{\min{(m,k)}} i!{m \choose i}{k \choose i}$, 
$O(\min{\{nm \lg k, nk\lg m\}})$ bits, which can answer
sorted 1-sided \topk{} queries.
Also, the space bound is asymptotically optimum.
\end{theorem} 
\begin{proof}
For $1 \le j \le n$, we first define the elements of $j$-th column in $A$ as $a_{1j} \dots a_{mj}$.
Then we define the sequence $S_j = s_{1j} \dots s_{mj}$ such that for $1 \le i \le m$, $s_{ij}= l$ if $a_{ij}$
is the $l$-th largest element in $A[1 \dots m][1 \dots j]$ with $l \le k$,
and $s_{ij} = k+1$ otherwise. 
Since there exist $T = \sum_{p=0}^{\min{(m,k)}}{m \choose p}{k \choose p}p!$
possible $S_j$ sequences
($T$ is the total number of ways in which we can choose $p$ out of the $m$ rows for new entries 
into the \topk{} positions, summed over all possible values of $p$),
we can store $S^{A} = S_1 \dots S_n$ using $n\ceil{\lg{T}} = O(\min{\{nm \lg k, nk\lg m\}})$
bits and we can answer the sorted 1-sided \topk{}$(1, m, 1, j)$ queries on $A$ by the following procedure.
\begin{enumerate}
\item Find the rightmost column $q$, for some $q \le j$, such that $S_{q}$ has $\ell >0$ elements
$s_{p_{1}q},  \dots, s_{p_{\ell}q}$ where  $s_{p_{1}q} <  \dots < s_{p_{\ell}q} < k+1$. 
If $\ell = k$, we return the positions of $(p_1, q) \dots (p_k, q)$ as the answers of the query, and stop.
Otherwise (if $\ell < k$), we return the positions of $(p_1, q) \dots (p_{\ell}, q)$, and 
\item  Repeat Step 1 by setting $k$ to $k-\ell$, and $j$ to $q-1$.
\end{enumerate}
We can return the positions in the sorted order of their corresponding values by applying the procedure to sort the answers of 1-sided $\topk{}$ queries on $S^A$, described above. This shows the upper bound.
%
%similar to the 1D array case as described above.
%We summarize the result in the following theorem.
%since $mn\lg{k} = n \lg (1+(k-1))^m = n\lg{\sum_{j=0}^{m}{m \choose j}(k-1)^j} \ge n\lg{T}$.
%The following theorem shows that the space usage of this encoding is 
%optimal for answering sorted 1-sided \topk{} queries on $A$.

% \begin{theorem}
% \label{thm;one_sided_sortedlower}
% Any encoding of a 2D array $A[1 \dots m][1 \dots n]$ that supports
% sorted 1-sided \topk{} queries requires 
% $\Omega(n \cdot \min{(m \lg k, k\lg m)})$ bits.
% %where $T = \sum_{i=0}^{\min{(m,k)}} i!{m \choose i}{k \choose i}$.
% \end{theorem}
% \begin{proof}
%It is enough to show that at least 
We now show that
$n\lg{T}$ bits are necessary to encode sorted 1-sided \topk{} queries on $A$. 
Suppose there are two distinct sequences $S^{A} = S_1 \dots S_i$ and $S^{A'}=S'_1 \dots S'_i$
which give sorted 1-sided \topk{} encodings of 2D arrays $A$ and $A'$, respectively. 
For $1 \le b \le n$, if $S_b \neq S'_b$ then \topk{}$(1, m, 1, b, A)$ $\neq$ \topk{}$(1, m, 1, b, A')$
by the definition of $S^{A}$ and $S^{A'}$.
Since for an $m \times n$ array, there are $T^n$ distinct sequences $S^{A_1} \dots S^{A_{T^{n}}}$,
We now prove for $1 \le q \le T^n$,
each $S^{A_q} = S^q_1 \dots S^q_n$ has an array $A$ such that $S^{A} = S^{A_q}$, which completes the proof of the theorem.

Without loss of generality, suppose that all elements in $A$ come from the set $L = \{1, \dots , mn \}$.
Then we can reconstruct $A$ from the rightmost column using $S^{A_q}$ as follows.
If $s^q_{jn}  \le k $, for $1 \le j \le m$, we assign the $s^q_{jn}$-th largest element in $L$ to $A[j][n]$.
After we assign all values in the rightmost column with $s^q_{jn} \le k $,
we discard all assigned values from $L$, move to $(n-1)$-th column and repeat the procedure.
After we assign all values in $A$ whose corresponding values in $S^{A_q}$ are smaller than $k+1$,
we assign the remaining values in $L$ to remaining positions in $A_q$ which are not assigned yet.
Thus for any $1 \le b \le n$, if  $S_b^{q}$ has $\ell >0$ elements
$s_{p_{1}b},  \dots, s_{p_{\ell}b}$ where  $s_{p_{1}b} <  \dots < s_{p_{\ell}b} < k+1$, 
then the $b$-th column in $A$ contains $\ell$-largest elements in $A[1 \dots m][1 \dots b]$ by the above procedure.
This shows that $S^A = S^{A_q}$.
\qed
\end{proof}
Note that this encoding takes less space than the encoding in the Proposition~\ref{prop;one_sided}.

\subsection{Unsorted 1-sided queries}
In this section, we consider the encoding of unsorted 1-sided $\topk{}(1, m, 1, b)$ queries on a 2D array $A$.
Let unsorted $\topk{}(1, m , 1, b) = b_1, \dots, b_k$ be the $k$ positions, ordered in their lexicographic order.
Grossi et al~\cite{DBLP:conf/esa/GrossiINRR13} show that for any array $B[1 \dots n]$ of size $n$,
one can support unsorted $\topk{}(1, i, B)$ queries for $1 \le i \le n$ using $n\lg{k}+O(n))$ bits with $O(k)$ query time. 
This implies that we can answer unsorted 1-sided \topk{} queries on $A$ using $mn\lg{k}+O(mn)$ bits with $O(k)$ query time, by converting the unsorted 1-sided \topk{} queries on $A$ into the unsorted \topk{} queries on a 1D array of size $mn$ which is obtained by the values of $A$ in column-major order. Now we consider the another encoding which supports unsorted query using optimal space, if query time is not of concern, and show the following.

\begin{theorem}
\label{thm:one_sided_unsortedds}
Given a 2D array $A[1 \dots m][1 \dots n]$, there is a data structure of size $O(\min{\{nk\lg{(m/k)}, nm\lg{(k/m)}}\})$ bits
which supports unsorted 1-sided \topk{} queries.
%in $O(n\min{(m, k)})$ time.
Moreover, the space bound is asymptotically optimal.
\end{theorem}
\begin{proof}
We first show the upper bound by considering the two cases, (i) $k \le m$, and (ii) $k > m$ separately.
\\\\
\noindent
{\bf Case (i) $k \le m$. } 
In this case, we first encode the answers of unsorted $\topk{}(1, m, 1, 1)$ query using $\lceil \lg{{m} \choose {k}} \rceil$ bits. 
For $b > 1$, the answers of $\topk{}(1, m , 1, b)$ are the positions of $k$ largest values in $A$ out of $m+k$ positions, corresponding to the positions in $\topk{}(1, m , 1, b-1)$ along with the positions $\{(1, b), \dots, (m, b)\}$. 
Thus, for any $1 < b \le n$, if we know the answers of unsorted $\topk{}(1, m , 1, b-1)$, we can encode the answers of unsorted $\topk{}(1, m , 1, b)$ using $\lceil \lg{{k+m} \choose {k}} \rceil$ bits. Hence, the total space for encoding the answers for
all the $n$ columns is $\lceil \lg{{m} \choose {k}} \rceil + (n-1)\lceil \lg{{k+m} \choose {k}} \rceil$ bits. 
We can find the answers of a 1-sided $\topk{}(1, m , 1, b)$ query as follows. We first decode the answers of $\topk{}(1, m , 1, 1)$, and decode the answers of unsorted 1-sided query from left to right until we decode the answers of $b$-th unsorted 1-sided \topk{} query.
\\
\noindent
{\bf Case (ii) $k > m$. }
When $1 \le b\le \lfloor k/m \rfloor$, it is obvious that $\topk{}(1, m, 1, b) = \{(i, j) | 1 \le i \le m, 1 \le j \le b\}$, i.e., all the positions within the query range are part of the answer. Therefore no extra space is needed for storing the answers of unsorted $\topk{}(1, m, 1, b)$ queries for $1 \le b\le \lfloor k/m \rfloor$. 
When $b> \lfloor k/m \rfloor$, we can encode the answers of $\topk{}(1, m, 1, b)$ queries using $(n-1-\lfloor k/m \rfloor)\lceil \lg{{k+m} \choose {k}}\rceil + \lceil\lg{{m} \choose {k-m\lfloor k/m \rfloor}}\rceil$ bits by using the similar encoding as described above (using $\lceil\lg{{m} \choose {k-m\lfloor k/m \rfloor}}\rceil$ bits for answering $\topk{}(1, m , 1, \floor{k/m}+1)$ query and $(n-1-\lfloor k/m \rfloor)\lceil \lg{{k+m} \choose {k}}\rceil$ bits for answering $\topk{}(1, m , 1, b)$ queries for any $b >\floor{k/m}+1$). 
Also, we can find the answers of unsorted 1-sided $\topk{}(1, m , 1, b)$ by a similar procedure as in Case (i). The only difference is when $1 \le b\le \lfloor k/m \rfloor$, we just report all positions in the sub-array $A[1 \dots m][1 \dots b]$.  
\\\\
From the above two cases, it follows that we can answer unosorted 1-sided \topk{} queries on $A$ using at most $n \log {{k+m} \choose k}$ bits.
If $k < m$, then $n \log {{k+m} \choose k} \approx nk \log (m/k)$; and if $k > m$, then $n \log {{k+m} \choose k} \approx nm \log (k/m)$. Thus the space bound can be written as
$O(\min{\{nk\lg{(m/k)}, nm\lg{(k/m)}}\})$ bits. 
This shows the upper bound stated in the theorem.

Now we show the lower bound.
Without loss of generality, suppose that all elements in the array come from the 
set $L_n = \{1, . . . ,mn\}$, and $k \le m$ (we can prove the case when $k > m$ 
in a similar way). Then it is enough to show that there are 
$N = {{m} \choose {k}} {{k+m} \choose {k}}^{(n-1)}$ arrays $A_1 \dots A_N$ 
of size $m \times n$ such that for any $1 \le s \neq t \le N$, there exists $1 \le b \le n$ 
such that unsorted $\topk{}(1, m, 1, b, A_s) \neq \topk{}(1, m, 1, b, A_t)$. 
 
We prove the above statement by induction on $n$. When $n=1$, we assign $\{m, 
\dots m-k+1 \}$ to the answers of unsorted $\topk{}(1,m,1,1)$, and assign 
$L_1-\{m, \dots m-k+1 \}$ to remaining positions arbitrary. 
Since any $k$ positions in $(i, 1), 1 \le i \le m$ can be the answer of 
$\topk{}(1, m, 1, 1)$ query, there are ${m} \choose {k}$ arrays such that any 
two of them have different answers for the $\topk{}(1, m, 1, 1)$ query. 

Now assume the inductive hypothesis that the above statement holds for some $n'$ 
where $1 \le n' < n$, and that there 
are $N' = {{m} \choose {k}} {{k+m} \choose {k}}^{(n'-1)}$ arrays $A_1 \dots A_{N'}$,
satisfying the above statement. Let a set $M=L_{(n'+1)}-L_{n'} = \{m(n'+1), 
\dots m(n'+1)-k+1\}$ and $M' = \emptyset$.

To prove the inductive step for $n'+1$, we first pick an arbitrary $A_s$, $1 \le s \le N'$ 
from the set of arrays $\{A_1 \dots A_{N'}\}$ and add the $(n'+1)$-th column 
to $A_s$ (none of the positions in this column have an assigned value).
Then for some integer $\alpha$ where $0 \le \alpha \le k$, we pick the answers of 
unsorted $\topk{}(1, m, 1, n'+1)$ query by choosing $\alpha$ positions from the set $\topk{}(1, m, 1, n', A_s) = \{n'_1 
\dots n'_k\}$ and choosing the remaining $k-\alpha$ positions from the $(n'+1)$-th column.
We then choose $\alpha$ elements from $M$ and add them to the $\alpha$ positions chosen from 
the set $\{n'_1 \dots n'_k\}$.
%and we assign the elements in $M$ to these $k$ selected positions. 
When we assign a value $x \in M $ to $(i, j) 
\in \{n'_1 \dots n'_k\}$, we delete $x$ from $M$, set $M' = M' \cup A_s[i][j]$ and $A_s[i][j] = x$. 
Since $n'_i \le n'm$ for all $1 \le i \le k$, it is easy to show that this does 
not change the answers of unsorted  $\topk{}(1, m, 1, b, A_s)$ for all $1 \le b 
\le n'$. 
Next, we assign the remaining values in $M$ to the $k -\alpha$ chosen positions in the
$(n'+1)$-th column, and finally assign the values in $M'$ to the remaining positions in that
column arbitrarily.
%
%we assign the values in $M \cup M'$ to remaining positions in the $(n'+1)$-th column arbitrarily. 
Since there are $\sum_{\alpha=1}^{k} {{k} \choose {\alpha}}{{m} \choose {k-\alpha}}={{m+k} \choose {k}}$ ways to select the answers of unsorted $\topk{}(1, m, 1, n'+1)$ query, and for each $s$, there are $Q = {{m+k} \choose {k}}$ arrays $A_{s1} \dots A_{sQ}$ such that for all $1 \le t \neq \ell \le Q$ and $1 \le b \le n'$, unsorted $\topk{}(1, m, 1, b, A_{st})=\topk{}(1, m, 1, b, A_{s\ell}) = \topk{}(1, m, 1, b, A_s)$ but unsorted $\topk{}(1, m, 1, n'+1, A_{st}) \neq \topk{}(1, m, 1, n'+1, A_{s\ell})$. 
Therefore the above statement holds for $n'+1$ whenever it holds for $n'$, which proves the theorem.
\qed
\end{proof}

Note that we can construct the encoding of Theorem~\ref{thm:one_sided_unsortedds} in $O(mn\lg{k}+nk)$ time by maintaining a min-heap of size at most $k$. More precisely, we insert the values of $A$ in column-major order and delete the minimum value in the heap when the size of the heap is more than $k$.
We can answer the position of the $k$ largest values in $A[1,m][1,i]$ for $1 \le i \le n$, by scanning the heap after every $m$-th insertion, in $O(k)$ time.
\\\\
\noindent\textbf{Remark}
Let $S_{sorted}$ and $S_{unsorted}$ be the space needed to encode the sorted and unsorted 1-sided \topk{} queries respectively. 
For 1D array, Gawrychowski and Nicholson showed that $S_{sorted}/S_{unsorted} \le 2$ (thus, the space requirements are asymptotically same for both case)~\cite{DBLP:conf/cpm/GawrychowskiN15}. 
In contrast, in 2D array case when $k = m$, $S_{sorted}/S_{unsorted} \le \lg k$
%\max{(\lg k , \lg m)}$ by 
by Theorems~\ref{thm:one_sided_sortedupper} and \ref{thm:one_sided_unsortedds}, which implies the gap between the space needed to encode the 1-sided \topk{} queries for sorted and unsorted case for a 2D array is significantly more than the case for a 1D array.
\section{Encoding 4-sided \topk{} queries on on $2 \times n$ array}
\label{sec:4sided}
In this section, we give an encoding which supports general \topk{} queries on $m \times n$ 2D array $A$. Note that if query time is not of concern in the above data structure, one can simply consider the $mn\ceil{\lg{(mn)}}$-bit trivial encoding for answering $\topk{}$ queries on $A$, by storing the rank (i.e., the position in sorted order) of all the elements in $A$.
We first introduce an $O(mn\lg{n})$-bit data structure which supports  \topk{} query in $O(k)$ time
by using the \RMQ{} encoding of Brodal et al.~\cite{DBLP:conf/esa/BrodalBD13}.

\begin{proposition}
\label{prop;four_sided}
Given a 2D array $A[1 \dots m][1 \dots n]$, 
there exists an $O(mn\lg{n})$-bit data structure
to support unsorted \topk{}$(i, j, a, b, A)$ in $O(k)$ time
for $1 \le a,b \le m $ and $1 \le i,j \le n$.
\end{proposition}  
\begin{proof}
We use a data structure similar to the one outlined in~\cite{DBLP:conf/isaac/BrodalFGL09}
(based on Frederikson's heap selection algorithm~\cite{DBLP:journals/iandc/Frederickson93}) 
for answering unsorted \topk{} queries in 1D array\footnote{Brodal et al.~\cite{DBLP:conf/isaac/BrodalFGL09}  
also give another structure to answer sorted \topk{} queries, with the same time and space bounds.}. 
First encode $A$ using $O(mn\lg{n})$ bits to support \RMQ{} (range maximum) queries in constant time for any rectangular range in $A$.
This encoding also supports finding the rank of any element in $A$ in $O(1)$ time~\cite{BDR12}.
Next, let $x = A[x_1][x_2]$ be the maximum value in $A[i \dots j][a \dots b]$, which can be found using an \RMQ{} query on $A$. 
Then consider the $4$-ary heap obtained by the following procedure.
The root of the heap is $x$, and its four subtrees are formed by recursively constructing the $4$-ary heap
on the sub-arrays $A[i \dots x_1-1][a \dots b]$, $A[x_1+1 \dots j][a \dots b]$, $A[x_1][a \dots x_2-1]$ and $A[x_1][x_2+1 \dots b]$,
respectively.
Now, we can find the $k$ largest elements in the above $4$-ary heap in $O(k)$ time using the algorithm proposed by 
Frederickson~\cite{DBLP:journals/iandc/Frederickson93} (note that this algorithm only builds a heap with $O(k)$ nodes
which is a connected subgraph of the above $4$-ary heap).
\qed
\end{proof}
We now introduce alternative encoding to support \topk{} queries on an $m \times n$ 2D array $A$, which take less space than the trivial encoding for small $m$. The overall idea is as follows. 
We first show that $4n$ bits are sufficient for answering sorted 4-sided $\topk{}$ queries on $A$ for $m=2$ when encodings for answering sorted 2-sided $\topk{}$ queries for each row are given. This encoding is obtained by encoding the binary DAG which is defined later.
After that, we extend the encoding into $m \times n$ array and obtain $(m\lg{{(k+1)n \choose n}}+2nm(m-1)+o(n))$-bit encoding for answering $\topk{}$ queries on $A$. 
Note that this encoding takes less space than
the trivial $mn\ceil{\lg{(mn)}}$-bit encoding when $m = o(\lg {n})$. 

Now we describe the encoding of sorted 4-sided \topk{} on $A$ when $m = 2$. For $1 \le i \le 2$, let $A_i[1 \dots n]$ be the array of the $i$-th row in $A$, and assume that sorted 2-sided $\topk{}$ encodings on $A_1$ and $A_2$ are already given.
When $k=1$, i.e., to answer $\RMQ{}$ queries on $A$, one can use the \textit{joint Cartesian tree} of Golin et al.~\cite{DBLP:journals/tcs/GolinIKRSS16}. 
The joint Cartesian tree of constructs a conceptual binary tree analogous to a Cartesian tree, storing a bit indicating which row the maximum element in the range comes from, splitting the range corresponding to the node at the position of the maximum element, and then recursing on each subrange. Thus by storing an extra $n$ bits (one at each node of the conceptual binary tree), they showed that one can answer $\RMQ{}(1, 2, a, b, A)$ queries for any $1 \le a, b \le n$, if the encodings for answering $\RMQ{}$ on $A_1$ and $A_2$ are given. See~\cite{DBLP:journals/tcs/GolinIKRSS16} for details.
To answer the sorted 4-sided \topk{} queries with $k \ge 2$, we extend the idea of a joint Cartesian tree into a DAG-based structure, denoted by $D^k_A$, which is defined as follows. 
 
Every node $p$ in $D^k_A$ is labeled with some closed interval $I_p = [a, b]$, where $1 \le a, b \le n$. 
In this case, we use both $\topk{}(p)$ and $\topk{}(I_p)$ to refer to the sorted $\topk{}(1, 2, a, b, A)$ query. 
For a node $p$ with label $I_p = [a, b]$ in $D^k_A$ and $1 \le i \le k$, let $(p^{i}_r, p^{i}_c)$ be the 
position of the $i$-th largest element in $A[1, 2][a \dots b]$.
Now we define $D^k_A$ as follows (see Figure~\ref{fig:dag} for an example.). 

\begin{figure}
	\centering
	\scalebox{0.9}{
		\begin{tabular}{|c||c|c|c|c|c|c|c|c|c|}
			\hline
			$A_1$&1&21&17&12&20&3&15&11&10 \\\hline
			$A_2$&6&5&16&14&19&2&18&4&7 \\
			\hline
			
		\end{tabular}
	}
	\vspace{0.02in}
	\usetikzlibrary{positioning}
	\scalebox{0.9}{
		\begin{tikzpicture}[
		% copy to preamble to use it everywhere \tikzset{node/.style...}
		node/.style={
			% shape
			rounded corners=1mm,
			minimum size=3mm,
			%border
			thick,
			draw,
		},
		node distance=15mm
		]
		\node(a)[node]{[1,9]};
		\node(b)[node,below left=of a, xshift=-0.7cm]{[1,4]};
		\node(c)[node,below right=of a, xshift=0.7cm]{[3,9]};
		\node(d)[node,below left=of b, xshift=0.7cm]{[1,2]};
		\node(e)[node,below right=of b, xshift=-0.7cm]{[3,4]};
		%\node(..)[dummy,below= of c]{\ldots};
		\node(f)[node,below left =of c, xshift=0.7cm]{[3,6]};
		\node(g)[node,below right =of c, xshift=-0.7cm]{[6,9]};
		
		\node(h)[node,below left=of d, xshift=1.5cm]{[1,1]};
		\node(i)[node,below right=of d, xshift=-1.3cm]{[2,2]};
		
		\node(j)[node,below left=of e, xshift=1.3cm]{[3,3]};
		\node(k)[node,below right=of e, xshift=-1.5cm]{[4,4]};
		
		\node(l)[node,below=of f, xshift=-0.8cm, yshift=0.5cm]{[4,6]};
		\node(m)[node,below left=of g, xshift=1cm]{[6,7]};
		\node(n)[node,below right=of g, xshift=-1cm]{[8,9]};
		
		%\node(o)[node,below left=of m, xshift=1.5cm]{[6,6]};
		\node(p)[node,below right=of m, xshift=-1.5cm]{[7,7]};
		
		\node(q)[node,below left=of n, xshift=1.5cm]{[8,8]};
		\node(r)[node,below right=of n, xshift=-1.3cm]{[9,9]};
		
		%\node(s)[node,below left=of l, xshift=1cm, yshift=-.5cm]{[4,4]};
		\node(t)[node,below =of l, yshift=1.25cm]{[5,6]};
		
		\node(u)[node,below left=of t, xshift=1cm, yshift=0.75cm]{[5,5]};
		\node(v)[node,below right=of t, xshift=-0.5cm,yshift=0.75cm]{[6,6]};
		
		\path(a) edge[->] (b);
		\path(a) edge[->] (c);
		\path(b) edge[->] (d);
		\path(b) edge[->] (e);
		\path(c) edge[->] (f);
		\path(c) edge[->] (g);
		\path(f) edge[->] (e);
		
		\path(d) edge[->] (h);
		\path(d) edge[->] (i);
		\path(e) edge[->] (j);
		\path(e) edge[->] (k);
		
		\path(f) edge[->] (l);
		\path(l) edge[->] (k);
		\path(l) edge[->] (t);
		
		\path(g) edge[->] (m);
		\path(g) edge[->] (n);
		
		\path(m) edge[->] (v);
		\path(m) edge[->] (p);
		\path(n) edge[->] (q);
		\path(n) edge[->] (r);
		
		\path(t) edge[->] (u);
		\path(t) edge[->] (v);
		\end{tikzpicture}
	}
	\caption{$2 \times n$ array $A$ and the DAG $D^3_A$.}   
	\label{fig:dag}
\end{figure}

%(\lg{(k+1)})
\begin{enumerate}
	\item{} The root of $D^k_A$ is labeled with the range $[1, n]$.
	\item{} A node $p$ with label $I_p = [a,b]$ does not have any child node (i.e., leaf node) if $2(b-a+1) \le k$.
	\item{} Suppose there exists a non-leaf node $p$ with label $I_p = [a, b]$ in $D^k_A$, and let
	$a'$ and $b'$ ($a \le a' \le b' \le b$) be the leftmost and rightmost column indices among the answers of $\topk{}(p)$, respectively.
	If $a < b'$, then the node $p$ has a node with label $[a, b'-1]$ as a left child. Similarly, if $a' < b$, the node $p$ has a node with label $[a'+1, b]$ as a right child.
\end{enumerate}

The following lemma states some useful properties of $D^k_A$.

\begin{lemma}
	\label{lem:dag}
	Let $A$ be a $2 \times n$ array. the following statements hold.
	\begin{itemize}
		\item{(i)} For any two distinct nodes $p$ and $q$ in $D^k_A$, $\topk{}(p) \neq \topk{}(q)$ (i.e., any two distinct nodes have different \topk{} answers). 
		\item{(ii)} $I_p \subset I_q$ if and only if $p$ is descendant of $q$ in $D^k_A$.
		\item{(iii)} For any interval $[a, b]$ with $1 \le a \le b \le n$, there exists 
		a unique node $p$ in $D^k_A$ which satisfies (i) $[a, b] \subset I_p$, and (ii) label of any descendant of $p$ does not contain $[a,b]$.
		Furthermore, for such a node $p$, $\topk{}([a,b]) = \topk{}({p})$.
	\end{itemize}
\end{lemma}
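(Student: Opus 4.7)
I would prove the three parts in a linked order: the easy direction of (ii), then (i), then the hard direction of (ii), and finally (iii).

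The forward direction of (ii) is immediate from the construction: each child $[a, b'-1]$ or $[a'+1, b]$ of a node $[a, b]$ is a strict subinterval, since $a \le a' \le b' \le b$, and transitivity propagates this along any descent path. For (i), suppose two distinct nodes $p$ and $q$ share the \topk{} answer $S$, with column span $[a', b']$. Then both nodes contain $[a', b']$ and their leftmost/rightmost \topk{} columns both equal $a'$ and $b'$. By construction, every proper descendant of such a node drops at least one of $a', b'$, so neither $p$ nor $q$ has a proper descendant containing $[a', b']$; this rules out the case where one of $p, q$ is a proper descendant of the other. The remaining configuration --- incomparable intervals both containing $[a', b']$ --- is the main obstacle: I would argue that the interval of any such node is the unique maximal outward extension of $[a', b']$ along columns whose maximum value lies strictly below $\min S$, because the DAG construction rigidly determines how far a node's left and right endpoints can push past the \topk{} columns. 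This extends the canonical-form argument of Jo et al.~\cite{DBLP:conf/cpm/JoLS16} for $k = 2$.

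For the reverse direction of (ii), assuming (i), I induct on $b_q - a_q$. Given $p \subsetneq q$ with \topk{}-column extremes $a'_q, b'_q$ in $q$: if $p$ omits column $b'_q$, then $p \subseteq [a_q, b'_q - 1]$, the left child of $q$, and induction applies; symmetrically if $p$ omits $a'_q$. Otherwise $[a'_q, b'_q] \subseteq p \subseteq q$, so the \topk{} positions of $q$ all lie inside $p$, forcing $\topk{}(p) = \topk{}(q)$; by (i) this gives $p = q$, contradicting $p \subsetneq q$. A subtlety is the leaf case, where $q$ has no children to descend into; this reduces to showing that no proper subinterval of a leaf of $D^k_A$ is itself a node, a structural fact that follows from the same canonical-form reasoning used in (i). For (iii), existence follows by greedy descent from the root, stopping when no child contains $[a, b]$. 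At the stopping node $p_{[a,b]} = [a_p, b_p]$, the conditions $[a_p, b'_p - 1] \not\supseteq [a, b]$ and $[a'_p + 1, b_p] \not\supseteq [a, b]$ force $[a'_p, b'_p] \subseteq [a, b] \subseteq [a_p, b_p]$, so the \topk{} positions of $[a_p, b_p]$ all lie in $[a, b]$ and form $\topk{}([a,b])$ as well. Uniqueness reduces to (i), since any two valid candidates share the same \topk{} answer.

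The main obstacle is the incomparable case in (i). Two distinct DAG nodes sharing a \topk{} answer would both contain $[a', b']$ while extending outward differently, and ruling this out requires establishing that the construction produces a unique maximal extension on each side --- a canonical-form fact implicit in the $k = 2$ argument of Jo et al.~\cite{DBLP:conf/cpm/JoLS16} that must be carefully generalized. This same reasoning also closes the leaf subcase noted in (ii).
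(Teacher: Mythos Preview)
The paper itself does not give a proof of this lemma: it only states that ``all the statements in the lemma can be proved by simply extending the proofs of the lemmas in~\cite{DBLP:conf/cpm/JoLS16}.'' Your proposal therefore goes well beyond the paper's own treatment, and is in fact an attempt to carry out exactly the extension the paper gestures at. In that sense your approach and the paper's are aligned---both reduce to the $k=2$ arguments of Jo et al.---but you actually sketch the details.

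On the substance: the easy direction of (ii), the descendant subcase of (i), the induction for the reverse direction of (ii), and the greedy descent plus uniqueness-via-(i) for (iii) are all sound and are the natural generalizations. The one place you correctly flag as the obstacle---the incomparable case of (i)---is also the one place your sketch is thin. The ``unique maximal outward extension'' intuition is right, but the cleanest way to pin it down is to argue structurally about how a node's endpoints arise: if $p=[a_p,b_p]$ is a non-root node and $b_p<n$, then on \emph{some} root-to-$p$ path there is a left-child step that first sets the right endpoint to $b_p$, and at that step the parent has a \topk{} element in column $b_p+1$; since that parent contains $p$, this element exceeds $\min(\topk{}(p))$. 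Symmetrically for $a_p>1$. Hence every node with \topk{} answer $S$ is the unique maximal interval with that answer, and two incomparable nodes cannot share $S$. Making this explicit would close the gap you identify; without it, the sketch is an outline rather than a proof, which is still more than the paper provides.
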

\begin{proof}
(i) From the construction of $D^k_A$, one can observe that if there is a node with label $[a,b]$ in $D^k_A$, 
with $1 < a \le b < n$, then both $(a-1)$-th and $(b+1)$-th column contain at least one element 
that is larger than the elements in $\topk{}([a,b])$, which implies $\topk([a,b]) \neq \topk([a,b+1])$ and $\topk([a-1,b]) \neq \topk([a,b])$.
Now suppose that there are two distinct nodes with labels $[a,b]$ and $[a', b']$ with $b < b'$ such that $\topk([a,b]) = \topk([a', b'])$,
then $\topk([a,b+1]) = \topk([a', b'])$, contradicting the fact that $\topk([a,b]) \neq \topk([a,b+1])$. 
The case when $b > b'$, $a > a'$ or $a < a'$ is analogous. 
\\\\
(ii) Let $I_p = [a_p, b_p]$ and  $I_q = [a_q, b_q]$. From the construction of $D^k_A$, it is the case that if $p$ is a descendant of $q$, then $I_p \subset I_q$.
Now, suppose that there are two nodes $p, q \in D^k_A$ such that $I_p \subset I_q$ but $p$ is not descendant of $q$. 
Then there exists a node $q'$ which  satisfies (i) $q'$ is a descendant of $q$, (ii) $I_p \subset I_{q'}$, and
(iii) no child of $q'$ whose label contains $I_p$.
Since neither of labels of the children of $q'$ contain $I_p$, all the positions of $\topk{}(q')$ are between $a_p$-th and $b_p$-th column.
(otherwise, there always exists a child $q''$ of $q'$ which satisfies $I_{q''} \subset I_p$).
But this would imply that $\topk(q') = \topk(p)$, which leads to a contradiction with Lemma~\ref{lem:dag}(i).
\\\\
(iii) We first show that there exists a unique node $p$ in $D^k_A$ such that $I_p$ contains the interval $[a,b]$ 
and none of labels of the children of $p$ contain $[a,b]$. We then show that the $\topk(p) = \topk([a,b])$.

Since label of the root in $D^k_A$ contains all column indices in $A$, it is easy to see that there exists 
at least one node $p$ with label $I_p = [a_p, b_p]$ in $D^k_A$ such that 
$[a, b] \subset I_p$ but no child of $I_p$ contains $[a,b]$. 
Suppose that there exists another node $p'$ with label  $I_{p'} = [a_p', b_p']$ in $D^k_A$ such that 
$[a, b] \subset I_{p'}$ but there is no child of $p'$ whose label contains $[a,b]$.
By Lemma~\ref{lem:dag}(ii), it follows that $I_p \not\subset I_{p'}$ and $I_{p'} \not\subset I_p$
(otherwise, one of them would be a descendant of the other, contradicting the conditions on $p$ and $p'$).
Now, suppose that $a_p < a_p' < b_p < b_p'$ (the case when $a_p' < a_p < b_p' < b_p$ is analogous).
Then there exists a column $c < a_p'$ such that $p$ has a child node with label $[c, b_p]$ where $[a,b] \subset [c, b_p]$ by the property of $D^k_A$ (note that $a_p' \le a \le b \le b_p$), 
contradicting the fact that $p$ does not have such a child.
This shows that there is a unique such $p$ in $D^k_A$.

Now we claim that $\topk([a, b]) = \topk(p)$. Suppose that there exist a $c \notin [a, b]$ in $I_p$ such that 
column $c$ contains at least one of the answers to $\topk(p)$. 
Also without loss of generality, we assume that $c < a $ (the case when $c > b$ can be handled in a similar way).
Then by the property of $D^k_A$, $p$ has a child with label $[c+1, b_p]$ which still contains $[a,b]$, 
contradicting the fact that $p$ does not have such a child.
\qed
\end{proof}

By  Lemma~\ref{lem:dag}(iii), if the DAG $D^k_A$ and the answers for each sorted 2-sided \topk{} queries corresponding to all the nodes in $D^k_A$ are given, then we can answer any sorted $\topk{}(1,2, a, b, A)$ query by finding the corresponding node in $p$ in $D^k_A$ which satisfies $\topk{}(1,2, a, b, A) = \topk{}(p)$.

We now describe how to encode $D^k_A$ using at most $4n$ bits. 
%to answer the $\topk{}(p)$ query for each node $p \in D^k_A$.  
The main idea of our encoding is as follows. For each node in $D^k_A$, we assign at most 2 bits (except the root node, which is assigned $k$ bits) while traversing all the nodes in the level order. These bits enable us to answer \topk{} queries on the range corresponding to each node in $D^k_A$, using the $\topk{}$ answers of the nodes in the previous level (more specifically, one of the parent nodes), and the $\topk{}$ encodings of the individual rows. By Lemma~\ref{lem:dag}(iii), this encoding is enough to answer all possible $\topk{}(1, 2, a, b, A)$ queries for any $1 \le a, b \le n$. However, since there exists at most $O(kn)$ nodes in $D^k_A$ (see Lemma~\ref{lem:numdag}), this encoding takes $O(kn)$ bits. To make the space independent to $k$, we skip some {\em redundant nodes} in $D^k_A$ (i.e., nodes for which the answers of $\topk{}$ on that nodes can be answered using the information obtained by some of the already traversed nodes, without any extra information). 
We modify the original level order to \textit{modified level order}, which will be describe later, and show that if we encode $D^k_A$ according to the modified level order, we can encode $D^k_A$ at most $4n$ bits, by skipping the redundant nodes during the traversal. 
\\\\
\noindent\textbf{Modified level-order.} For two nodes $p_i$ with label $I_{p_i} = [a_i, b_i]$ and $p_j$ with label  $I_{p_j} = [a_j, b_j]$ which satisfy $I_{p_i} \not\subset I_{p_j}$ and $I_{p_j} \not\subset I_{p_i}$, we say the node $p_i$ \textit{precedes} the node $p_j$ if $a_i< a_j$. 
Now, let $q$ be one of the parents of the node $p$ with label $I_p = [a,b]$ (note that a node can have multiple parents in a DAG). 
Note that $1 \le |\topk{}(q)-\topk{}(p)| \le 2$, since $I_p$ contains all the answers of $\topk{}(q)$ except one or both positions from the column $a-1$ or from the column $b+1$. Also let $f_p$ and $s_p$ be the number of positions in $\topk{}(q) \cap \topk{}(p)$ on the first and the second row respectively. Now we consider the following two cases: 

\begin{itemize}
    \item Case 1 ($|\topk{}(q)-\topk{}(p)|$ = 1) : In this case, the positions of $\topm{}(p)$ are already contained in the answers of $\topk{}(q)$, and the $k$-th largest element in $A[1,2][a, \dots, b]$ is either the $(f_{p}+1)$-th largest element in $A_1[a, \dots, b]$ or the $(s_{p}+1)$-th largest element in $A_2[a, \dots, b]$ (we call them as \textit{first-candidates} at node $p$).
    \item Case 2 ($|\topk{}(q)-\topk{}(p)|$ = 2) : In this case, the positions of $\tops{}(p)$ are already contained in the answers of $\topk{}(q)$, and the $(k-1)$-th largest element in $A[1,2][a, \dots, b]$ is the one of the the first-candidates at node $p$. Now suppose $(k-1)$-th largest element in $A[1,2][a, \dots, b]$ is on the first row (the other case is analogous). Then again, the $k$-th largest element in $A[1,2][a, \dots, b]$ is on the one of the positions of $(f_{p}+2)$-th largest element in $A_1[a, \dots, b]$ and $(s_{p}+1)$-th largest element in $A_2[a, \dots, b]$ (we call them as \textit{second-candidates} at node $p$). 
\end{itemize}

Note that if $f_p$ and $s_p$ are given, the first and second-candidates at node $p$ can be found using the $\topk{}$ encodings of $A_1$ and $A_2$. Figure~\ref{fig:modified} shows the overall procedure of modified level-order. While traversing the nodes of $D^k_A$ in the modified level order, we classify the nodes as \textit{visited}, \textit{half-visited}, or \textit{unvisited}. All the nodes are initially unvisited, and the traversal continues until all the nodes in $D^k_A$ are visited. For example, we traverse the nodes of $D^3_A$ in Figure~\ref{fig:dag} as:
$[1,9]\rightarrow[1,4]\rightarrow[1,4]\rightarrow[3,9]\rightarrow[1,2]\rightarrow[1,2]\rightarrow[3,6]\rightarrow[6,9]\rightarrow[6,9]
\rightarrow[1,1]\rightarrow[2,2]\rightarrow[3,4]\rightarrow[4,6]\rightarrow[6,7]
\rightarrow[8,9]\rightarrow[8,9]\rightarrow[3,3]\rightarrow[4,4]\rightarrow[5,6]\rightarrow[8,8]\rightarrow[9,9]\rightarrow[5,5]\rightarrow[6,6]$ (here each node is denoted as its label).

\begin{figure}[htbp]
\begin{framed}
 \begin{enumerate}
	\item{} Mark the root of $D^k_A$ as visited, and add its children into \textit{visit-list}, which is an ordered list such that for two nodes $p$ and $q$ in visit-list, $p$ comes before $q$ in visit-list if and only if $l(p) < l(q)$ or $l(p) = l(q)$ and $p$ precedes $q$ in the DAG ($l(p)$ denotes the level of the node $p$ which is defined as the number of edges in the longest path from root to $p$ in $D^k_A$).
	\item{} Find the leftmost unvisited or half-visited node $p$ from visit-list which satisfies one of the following conditions (without loss of generality, assume that $x \le y$).
	\begin{enumerate}
		\item{} Number of first or second candidates of $p$ is less than 2.
		\item{} First or second candidates of $p$ are $(1,x)$ and $(2, y)$, 
		and there exists no node $p'$ in visit-list such that (a) $I_p \subset I_{p'}$, or (b) $p'$ precedes $p$ and $x \in I_{p'}$, or (c) $I_p$ precedes $I_{p'}$ and $y \in I_{p'}$.
	\end{enumerate}

	Then we continue the traversal from $p$.
	\item{} Let $q$ be a parent of $p$. If (i) $|\topk{}(q)-\topk{}(p)| = 1$, or (ii) $|\topk{}(q)-\topk{}(p)| = 2$ and $p$ is half-visited, or (iii) the number of first or second candidates of $p$ is less than 2, then mark $p$ as visited, delete $p$ from the visit-list, and insert $p$'s children (if any) to visit-list. If none of these three conditions hold, then mark $p$ as half-visited.
	\item{} Repeat Steps 2 and 3 until all the nodes in $D^k_A$ are marked as visited.
\end{enumerate}
\end{framed}
\caption{Modified level-order traversal of $D^k_A$}
\label{fig:modified}
\end{figure}

\noindent\textbf{Picking the positions.} 
For a node $p = [a, b] \in D^k_A$, $p$ \textit{picks} the position $(x, y)$ if (i) $(x,y)$ is among the $\topk{}$ positions of $p$, and (ii) this information (that $(x,y)$ is among the answers to $\topk(p)$ query) does not follow from the 
%$\topk{}$ positions we cannot know whether $(x, y)$ is one of the answers of $\topk{}(p)$ from the
$\topk{}$ positions of any of the visited or half-visited before $p$ in the modified level-order. By storing the information of all picked positions at node $p$, we can answer $\topk{}(p)$ by combining the answers of $\topk{}$  positions of some visited or half-visited nodes before $p$.

When the traversal starts at the root node of $D^k_A$, 
the root node picks the positions of $k$-largest values among the answers of $\topk{}(1, n, A_1)$ and $\topk{}(1, n, A_2)$ queries, and theses positions can be indicated using $k$ bits, since we assume that $\topk{}$ encodings of $A_1$ and $A_2$ are given.

Next, suppose we visit an unvisited non-root node $p$ where $|\topk{}(q)-\topk{}(p)| = 1$, where $q$ is a parent of $p$ (note that $q$ is always visited before $p$ in modified level order). In this case, since we can answer the positions of $(k-1)$ largest elements in $A$ using $\topk{}(q)$, $p$ picks at most one position, which is among the first-candidates at $p$. 
Thus, we can store the picked position at node $p$ using one extra bit. The case when $|\topk{}(q)-\topk{}(p)| = 2$ can be handled similarly, other than $p$ picks at most two positions (one from the first-candidates and another from the second-candidates), and this information can be stored using at most two extra bits. The following lemma shows that the size of $D^k_A$ is $O(kn)$, which in turn gives a simple $O(kn)$-bit space bound by storing the information of all the picked positions at each node of $D^k_A$.

\begin{lemma}\label{lem:numdag}
	Given $2 \times n$ array $A$ and DAG $D^k_A$,
	there are at most $6kn$ nodes in $D^k_A$. 
\end{lemma}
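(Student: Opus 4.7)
The plan is to bound $|D^k_A|$ by first bounding the number of edges of the DAG and then charging edges to ``addition events'' for positions of $A$. Since every non-root node has at least one parent, we have $|D^k_A| \le |E(D^k_A)| + 1$, so it suffices to bound $|E(D^k_A)|$. For each edge $(q,p) \in E(D^k_A)$, the transition from $q$ to $p$ removes the 1 or 2 top-$k$ positions in the boundary column of $q$ that is excluded from $p$'s interval, and introduces $|\topk{}(p) \setminus \topk{}(q)| \in \{1, 2\}$ new positions from $[a_p, b_p]$. Thus $|E(D^k_A)| \le \sum_{(r,j) \in A} \alpha(r,j)$, where $\alpha(r,j)$ denotes the number of edges at which $(r,j)$ is newly added to the top-$k$.

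The central step is to show $\alpha(r,j) \le 3k$ for every position $(r,j)$. The key observation is that at an addition event for $(r,j)$ at an edge $(q,p)$, the rank of $(r,j)$ in $q$ must be exactly $k+1$ or $k+2$: at most 2 positions are removed on a single edge, so the rank of $(r,j)$ can drop from $>k$ in $q$ to $\le k$ in $p$ by at most 2. Hence every parent $q$ contributing to $\alpha(r,j)$ has rank of $(r,j)$ in $\{k+1, k+2\}$, and each such $q$ has at most 2 children and thus contributes at most 2 addition edges. I would then use Lemma~\ref{lem:dag}(ii)--(iii) together with the DAG's construction rule to argue that the number of $D^k_A$ nodes at which $(r,j)$ has rank $k+1$ or $k+2$ is $O(k)$, giving $\alpha(r,j) \le 3k$. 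Summing over the $2n$ positions of $A$ yields $|E(D^k_A)| \le 6kn$, and therefore $|D^k_A| \le 6kn$.

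The hardest part will be bounding the number of $D^k_A$ nodes in which a fixed position has rank $k+1$ or $k+2$. The set of intervals $[a,b]$ giving $(r,j)$ a specified rank is generally a staircase in the $(a,b)$-plane containing up to $\Theta(n)$ points, so the $O(k)$ bound cannot hold for arbitrary intervals---it must be derived by showing that the DAG's structure excludes most such intervals from being reachable nodes. Lemma~\ref{lem:dag}(iii), which uniquely assigns each interval to a minimal node of $D^k_A$, together with the fact that each non-root node has at most 2 parents and that consecutive nodes in any chain of $D^k_A$ have top-$k$ answers differing by at most 2 positions, will be the primary structural tools for bounding the number of reachable nodes at a given rank level.
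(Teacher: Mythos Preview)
Your reduction to bounding $\alpha(r,j)$ is sound, but the proposal stops exactly where the real work begins: you do not prove $\alpha(r,j)\le 3k$, you only name Lemma~\ref{lem:dag}(ii)--(iii) and the bounded in/out-degree as ``tools.'' These facts do not, by themselves, bound the number of DAG nodes in which $(r,j)$ has a prescribed rank. Your plan focuses on the \emph{parent} $q$, where $(r,j)$ has rank $k{+}1$ or $k{+}2$ and is therefore \emph{outside} $\topk{}(q)$; but the DAG's construction rule speaks only about the leftmost and rightmost columns of $\topk{}(q)$, so it gives you no direct grip on column~$j$ at such a node. You still owe a structural argument limiting how many DAG nodes can realise a fixed rank for $(r,j)$, and none is supplied---the constant $3k$ appears to be reverse-engineered from the target $6kn$ rather than derived.

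The paper's proof takes a shorter route that sidesteps this difficulty. It bounds the number of \emph{non-leaf} nodes directly (this suffices: every non-leaf node has at most two children, so $|V|\le 2\cdot|\{\text{non-leaf}\}|+1$). Each non-leaf node $p$ is charged to its $k$-th largest position $(p^k_r,p^k_c)$, and the fibre over any fixed $(\alpha,\beta)$ has size at most $k$. Two ingredients establish this: (i)~any two nodes in the same fibre are incomparable, since a containment would force identical top-$k$ sets and contradict Lemma~\ref{lem:dag}(i); and (ii)~if $k{+}1$ pairwise-incomparable nodes $p_1,\dots,p_{k+1}$ all had $(\alpha,\beta)$ at rank~$k$, then after ordering them with $a_1<\cdots<a_{k+1}$ and $b_1<\cdots<b_{k+1}$, the DAG construction forces an element larger than $A[\alpha][\beta]$ at each column $b_i{+}1$, placing $k$ such elements inside $p_{k+1}$ and contradicting that $(\alpha,\beta)$ has rank~$k$ there. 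The decisive difference from your plan is that the paper works at rank~$k$, where the position \emph{is} in the top-$k$ and the DAG rule bites directly. If you want to keep the edge-counting framework, charge each addition edge to the \emph{child} $p$ (where $(r,j)$ has rank $k$ or $k{-}1$) rather than to the parent; then the same incomparability-plus-counting argument becomes available.
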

\begin{proof} 
	It is enough to show that there are at most $2kn$ non-leaf nodes in $D^k_A$. 
	Let $P_{[\alpha,\beta]} = \{I_{p_1} = [a_1, b_1], I_{p_2}=[a_2, b_2] \dots I_{p_t}=[a_t, b_t]\}$ be a set of labels of $t$ non-leaf nodes $p_1, p_2, \dots, p_t$ in $D^k_A$ where all the nodes $p_1, p_2, \dots, p_t$ picks $(\alpha, \beta)$. 
	%such that $(p_{i_r}^k, p_{i_c}^k) = (\alpha, \beta)$ for all $1 \le i \le t$. 
	Now we claim that $t$ is at most $k$.
	To prove a claim, suppose $t = k+1$.
	%that there are $k+1$ nodes $\{p_1 = [a_1, b_1], p_2=[a_2, b_2] \dots p_{k+1}=[a_{k+1}, b_{k+1}]\}$ in $P_{[a,b]}$. 
	Then it is clear that for any $I_{p_i}, I_{p_j} \in P_{[\alpha,\beta]}$, $I_{p_i} \not\subset I_{p_j}$ and $I_{p_j} \not\subset I_{p_i}$ by the modified level-order traversal of $D^k_A$.
	Therefore without loss of generality, assume that $a_1 < a_2 \dots < a_{k+1}< \beta < b_{1} < b_{2} \dots < b_{k+1}$.
	Also we can easily show that for $1 \le i <k+1$, there is a position at $(b_i+1)$-th column whose corresponding value is larger than $A[p_{i_r}^k][p_{i_c}^k]$ by the construction algorithm of $D^k_A$.
	Therefore for $1 \le i \le  k$, the node $A[1,2][a_{k+1} \dots b_{k+1}]$ has $k$ positions at $(b_i+1)$-th columns which have larger values than both $A[\alpha][\beta]$, contradicts to the fact that $(\alpha,\beta) \in \topk{}(p_{k+1})$.
	\qed
\end{proof}

We now describe how to make the space usage of our encoding to be independent of $k$ - from $O(kn)$ to $O(n)$. Suppose there exists two non-root nodes $p$ and $q$ in $D^k_A$ where the first (or second) candidates of $p$ are contained in $q$, and the candidates of $p$ and $q$ are not distinct. In this case, the modified level order always visits $q$ prior to $p$, and gives a `chance' not to pick any position at node $p$, although $q$ is not an ancestor of $p$. Using this property, we now prove the following lemma, which bounds the size of our encoding by showing that if we store the all picked positions according to the modified level order, we can encode $D^k_A$ in space independent to $k$.

\begin{lemma}\label{lem:pick}
	Given $2 \times n$ array $A[1,2][1 \dots n]$ and DAG $D^k_A$, any position in $A$ is picked at most twice while we traverse all nodes in $D^k_A$ in the modified level order.
\end{lemma}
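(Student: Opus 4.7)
The plan is to proceed by contradiction: suppose some position $(r,c)$ is picked at three distinct nodes $p_1, p_2, p_3$ in traversal order, and let $(r', y_i)$ denote the partner candidate (in the opposite row) compared against $(r,c)$ at $p_i$. First I would apply the picking rule at each $p_i$ with $i \ge 2$. Since the earlier $p_j$ ($j<i$) picked $(r,c)$ and therefore contains $c$, the rule ``no former picker of $(r,c)$ or $(r', y_i)$ contains both $c$ and $y_i$'' forces $y_i \notin p_j$. This yields the key chain of exclusions $y_2 \notin p_1$ and $y_3 \notin p_1 \cup p_2$.

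Next I would establish that $p_1, p_2, p_3$ are pairwise incomparable in $D^k_A$. The case $p_i \subsetneq p_j$ for some $i > j$ fails because $y_i \in p_i \subsetneq p_j$ would contradict the chain above; the case $p_i \supsetneq p_j$ for $i > j$ fails because $p_j$ would then be a strict DAG-descendant of $p_i$ by Lemma~\ref{lem:dag}(ii), and condition (a) of the modified level order forces every ancestor to be processed before its descendants, contradicting the traversal order. Since the three intervals are pairwise incomparable yet share the column $c$, a short case analysis on endpoints shows they must form a staircase around $c$: reindexing so $a_1 < a_2$ and $b_1 < b_2$, the third interval $p_3$ either extends further left than both (so $a_3 < a_1 < a_2$ and $b_3 < b_1 < b_2$, with $y_3 < a_1$) or further right than both (so $a_2 < a_3$ and $b_2 < b_3$, with $y_3 > b_2$); every intermediate configuration violates pairwise incomparability together with the constraint $y_3 \notin p_1 \cup p_2$.

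The heart of the argument is to derive a contradiction from the staircase using the modified level order; I would focus on the ``further right'' case, with the other being symmetric. Every ancestor of $p_3$ in $D^k_A$ contains both $c$ and $y_3$ (because it contains $p_3$'s interval), so if any ancestor had picked $(r,c)$ or $(r', y_3)$ then the picking rule would have blocked \emph{all three} assumed pickings of $(r,c)$; hence no ancestor of $p_3$ ever picked either of these two positions. I would then trace down the ancestor chain from the root, tracking how $(r,c)$ and $(r', y_3)$ rise into the top-$k$, and show that at the moment $p_3$ is considered for traversal, there must exist an unvisited or half-visited node $q$ in the visit-list satisfying condition (b) or (c) for $p_3$'s candidate pair $\{(r,c),(r',y_3)\}$: either $q$ precedes $p_3$ (i.e., $a_q < a_3$) with $c \in q$, or $q$ follows $p_3$ with $y_3 \in q$. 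The existence of such $q$ prevents $p_3$ from being selected for traversal at that moment, so before $p_3$ could pick, $q$ itself would have to be processed; but an inductive argument on the traversal will show that $q$ cannot be cleared without triggering a picking of $(r,c)$ or $(r',y_3)$ that in turn blocks $p_3$.

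The main obstacle is this last step: pinpointing the blocking node $q$ and confirming that it cannot be resolved without a conflicting earlier picking. The delicate part is interleaving three constraints simultaneously: the DAG containment rule (condition (a) and Lemma~\ref{lem:dag}(ii)), the column-waiting rules (b) and (c), and the two-bit-per-node picking rule for the half-visited states that arise when $|\topk(q)-\topk(p)|=2$. The bookkeeping must match the staircase orientation of $p_3$ relative to $p_1, p_2$ with the correct side (left/right) on which $q$ lives, and must handle both first-candidate and second-candidate events uniformly, since the third picking at $p_3$ may occur during its second visit while half-visited.
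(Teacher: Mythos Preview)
Your steps 1--4 are sound and in fact take you right up to the paper's key observation: the three pickers are pairwise incomparable intervals all containing $c$, and the last one in traversal order, $p_3$, must be an \emph{extreme} (leftmost or rightmost) in the precedes order, since $y_3\notin p_1\cup p_2$ rules out the middle slot.

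The gap is step 5. You try to block $p_3$ directly, but an extreme node is exactly the one that conditions (b)/(c) do \emph{not} naturally block from $p_1,p_2$, so you are forced into an ancestor-chain argument that you yourself flag as unfinished. One specific claim there is also wrong: an ancestor $q$ of $p_3$ that picked $(r,c)$ contains $c$ and $y_3$, but need not contain $y_1$ or $y_2$, so it would only block the picking at $p_3$, not ``all three''.

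The paper's proof avoids all of this by reordering the three nodes by the \emph{precedes} relation rather than by traversal time, and then arguing about the \emph{middle} node $p_b$. Writing its candidate columns as $x\le y$ (one of which equals $c$), one checks that the left neighbour $p_a$ always contains $x$ and the right neighbour $p_c$ always contains $y$; hence conditions (b) and (c) force both $p_a$ and $p_c$ to be processed before $p_b$. Whichever of the two cases $c=x$ or $c=y$ holds, one of $p_a,p_c$ then contains \emph{both} candidate columns of $p_b$ and has already picked $(r,c)$, so the picking rule blocks $p_b$---contradiction. In your framework this is one line: since $p_3$ is extreme, the middle node is $p_1$ or $p_2$; but the middle node is traversed last, contradicting that $p_3$ is last. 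That replaces your entire step 5.
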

\begin{proof}
%	{\bf $i$ and $j$ are overused. explain the modified level-order clearly earlier ..}
    %Suppose that a position $(i, j)$ is picked at $t$ distinct nodes $p_1=[a_1, b_1], p_2=[a_2, b_2], \dots p_t = [a_t, b_t]$. 
	Suppose that a position $(i, j)$ is among the answers of $\topk{}$ query on the $t$ distinct nodes $p_1, p_2, \dots, p_t$ where $I_{p_i}=[a_i, b_i]$ for $i \in \{1, 2, \dots, t\}$, but not among the answers of $\topk{}$ query on their parent nodes. 
	For $1 \le a, b \le t$ if $p_a$ is an ancestor of $p_b$, we don't pick $(i, j)$ at the node $p_b$ 
	by the modified level order traversal algorithm (Note that $p_a$ is traversed before $p_b$).
	Therefore without loss of generality, we assume that for all $1 \le a \neq b \le t$, $I_{p_a} \not\subset I_{p_b}$ and $I_{p_b} \not\subset I_{p_a}$. 
	Now we claim that for every position $(i, j)$ in $A$, at most two nodes from $p_1, \dots ,p_t$ pick $(i, j)$.
	%for every position $(i, j)$ in $A$, $(i,j)$ can be picked at most two nodes from $p_1 \dots p_t$.
	To prove the claim, for $1 \le a < b < c \le t$, suppose there exists three nodes $p_a$, $p_b$, and $p_c$ where all of these three nodes picks $(i, j)$, and let $(i', j')$ be the another (first or second)-candidate of $p_b$. Then by the modified level-order traversal algorithm we do not pick $(i, j)$ at $p_b$ (note that $j' \in I_{p_a}$ or $j' \in I_{p_b}$), which contradicts the assumption.
\qed
\end{proof}

For example, during the traversal of $D^3_A$ in Figure~\ref{fig:dag} according to modified level order, the position(s) picked at each node are:
$\{(1,2),(1,5),(2,5)\}\rightarrow(1,3)\rightarrow(2,3)\rightarrow(2,7)
\rightarrow(1,1)\rightarrow(1,2)\rightarrow \epsilon \rightarrow(1,7)\rightarrow(1,8)
\rightarrow \epsilon \rightarrow \epsilon \rightarrow(2,4)\rightarrow \epsilon \rightarrow (1,6)
\rightarrow (1,9) \rightarrow (2,9) \rightarrow \epsilon\rightarrow \epsilon \rightarrow \epsilon \rightarrow \epsilon \rightarrow \epsilon
\rightarrow \epsilon \rightarrow \epsilon$, respectively ($\epsilon$ indicates that no position is picked).
Now we prove our main theorem. 

\begin{theorem}\label{thm:2n}	
	Given a $2 \times n$ array $A$, if there exists an $S(n, k)$-bit encoding to answer sorted 2-sided $\topk{}$ queries on a 1D array of size $n$ in $T(n,k)$ time and such encoding can be constructed in $C(n,k)$ time, then  we can construct an encoding of $A$ that uses $2S(n, k) + 4n$ bits which can be constructed in $O(C(n,k)+k^2n^2+knT(n,k))$ time, for answering $\topk{}$ queries on $A$.
\end{theorem}
\begin{proof}
	For $1 \le i \le j \le n$, we first use $2S(n, k)$ bits to support sorted 2-sided $\topk{}(1,1,i, j, A)$ and $\topk{}(2,2,i,j,A)$ queries. 
	To answer $\topk{}(1,2,i,j,A)$ queries, from Lemma~\ref{lem:dag}(iii), we note that it is enough to encode the answers to the sorted \topk{} queries corresponding to all the nodes in $D^k_A$.
	We encode these answers into a bit string $X$ while traversing the DAG $D^k_A$ as follows. 
	When the traversal begins at the root, $X$ is initialized to a $k$-bit string, which stores information for answering $\topk{}(1,2,1,n,A)$ query (namely, the $i$-th bit stores $0$ or $1$ depending on whether the $i$-th largest element in the range comes from the top or bottom row, respectively). 
	Now we traverse $D^k_A$ in the modified level order from the root node. Whenever we find a node $p$ in Step (2) of the traversal algorithm described above, and if we pick a position $(x,y)$ at node $p$, we append a single bit to $X$ to find the answer from $p$'s first (or second) candidate. 
	
	Note that we can find such $p$ in $O(kn)$ time by the Lemma~\ref{lem:numdag} and find the first (or second) candidates of node $p$ in $T(n,k)+O(1)$ time using the encoding of \topk{} queries on individual rows and pre-visited nodes other than root node, which takes $T(n,k)+O(k\lg{k})$ time. Finally we can check whether one of the position in the first (or second candidates) of $p$ is picked at node $p$ or not in $O(n)$ time. Therefore whenever we traverse node $p$, $O(kn+T(n,k))$ time is sufficient for encoding a bit in $X$ to find the answer of $\topk{}(p)$ query. 
	Since we traverse any node at most twice in the modified level order, and since $D^k_A$ has at most $6kn$ nodes by Lemma~\ref{lem:numdag}, we can construct the encoding in $O(C(n,k)+k^2n^2+knT(n,k))$ time in total.
	Also by Lemma~\ref{lem:pick}, $|X| \le 4n$ after we traverse all the nodes in $D^k_A$.
	
	To decode answers of $\topk{}$ queries corresponding to the nodes in $D^k_A$ from $X$, we first construct the root and its children from the first $k$ bits, and whenever we find a node $p$  with label $I_p = [a,b]$ in Step (2) of the traversal algorithm described above, we decode $(p^{k}_r, p^{k}_c)$ when $p$ is unvisited and $|\topk{}(q)-\topk{}(p)| = 1$, or $p$ is half-visited and $|\topk{}(q)-\topk{}(p)| = 2$. Also we decode $(p^{k-1}_r, p^{k-1}_c)$ when $p$ is unvisited and $|\topk{}(q)-\topk{}(p)| = 2$. 
	The positions in $\topk{}(p)$ with larger positions can be easily answered by the answer of $\topk{}$ on the former traversed nodes. 
	Now let $(1,x)$ and $(2,y)$ be the first or second candidates in such unvisited or half-visited node $p$, which can be found by $\topk{}(a,b,A_1)$ and $\topk{}(a,b,A_2)$.
	%Without loss of generality, assume that $A[1][x] > A[2][x']$ and $x > y$
	If one of the candidates is already picked before at some node $p'$ (without loss of generality, assume that $(1,x)$ is picked by $p'$) and $y \in p'$, we can know $A[1][x] > A[2][y]$ with no extra information. If there is no such node we read next 1 bit to decode. Since $X$ is encoded in the modified level order, one can easily show that such bit is encoded for pick $(i,j)$ at $p$. Therefore, we can encode to answer $\topk{}(1,2,i,j,A)$ queries at most $4n$ extra bits, if we can answer sorted 2-sided $\topk{}$ queries on $A_1$ and $A_2$.
	\qed
\end{proof}

For the special case when $k=2$, the following theorem shows that the space bound of the encoding of Theorem~\ref{thm:2n} can be improved.

\begin{theorem}\label{thm:top2_2n}	
	Given a $2 \times n$ array $A$, if there exists an $S(n)$-bit encoding to answer sorted 2-sided $\topt{}$ queries on a 1D array of size $n$ in $T(n)$ time and such encoding can be constructed in $C(n)$ time, then  we can encode $A$ in $2S(n) + 3n$ bits using $O(C(n)+n^2+nT(n))$ time, for answering $\topt{}$ queries on $A$.
\end{theorem}
\begin{proof}
It is enough to show that the bit string $X$, defined in the proof of Theorem~\ref{thm:2n}, has length at most $3n$ when it is constructed under $D^2_A$. 
We claim that after all nodes in $D^2_A$ are traversed in modified level-order, $i$-th column is picked (i.e., any position in the $i$-th column is picked) at most three times for all $1 \le i \le n$, which proves the theorem. 
%and (2) if $i$-th column is picked three times, there exists a corresponding column $i' < i$ which is picked at most once, except possibly for the leftmost column that is picked three times. 
%Since all other columns are picked at most twice, (1) and (2) together imply that $|X| \le 2n+1$.

To prove the claim, let $f(i)$ (respectively, $s(i)$) be the position of the larger (respectively, smaller) element between $(1, i)$ and $(2, i)$, and suppose $p$ with label  $I_p = [a,b]$ be the first node in the modified level order at which the position $s(i)$ is picked for the first time. 
Then by the traversing algorithm and definition of $D^2_A$, $f(i)$ is already picked before $s(i)$ is picked.
This implies that $\topt(p) = \{f(i), s(i)\}$ and the $i$-th column is not contained in all descendants of $p$ (note that the labels of $p$'s children are $[a, i-1]$ and $[i+1, b]$ by the definition of $D^2_A$).
Also, we claim that  $s(i)$ is not picked at any other node $p'$ with label $I_{p'}= [a', b']$  where $I_{p} \not\subset I_{p'}$ and $I_{p'} \not\subset I_{p}$. 
To prove this, suppose that $s(i)$ is picked at the node $p'$, and without loss of generality, $p'$ precedes the node $p$.    
Then by the definition of $D^2_A$, the element in $f(a-1)$ is larger than $f(i) \in \topt{}(p') \cap \topt{}(p)$ (note that the ancestor of $p$ picks $f(a-1)$, to have $p$ as descendant). This implies $s(i) \not\in \topt(p')$ and hence $s(i)$ cannot picked at the node $p'$.  
Thus, $s(i)$ is only picked once at the node $p$, and $f(i)$ can be picked at most twice by Lemma~\ref{lem:pick}, which implies any column is picked at most three times.
\qed
\end{proof}

From the encoding of Theorem~\ref{thm:2n}, the following theorem shows that we can obtain an encoding for answering sorted 4-sided $\topk{}$ queries on an $m \times n$ array by extending the encoding of a $2 \times n$ array.
\begin{theorem}\label{thm:mn}
	Given an $m \times n$ array $A$, if there exists an $S(n, k)$-bit encoding to answer sorted 2-sided $\topk{}$ queries on a 1D array of size $n$, then we can encode $A$ in $mS(n, k) + 2nm(m-1)$ bits, to support sorted 4-sided $\topk{}$ queries on $A$. 
\end{theorem}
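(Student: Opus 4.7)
The plan is to reduce to Theorem~\ref{thm:2n} applied pairwise. First, for each row $A_r$ ($1\le r\le m$), store the assumed $S(n,k)$-bit sorted 2-sided \topk{} encoding, contributing $mS(n,k)$ bits. Then for each pair of rows $(r_1,r_2)$ with $1\le r_1<r_2\le m$, apply Theorem~\ref{thm:2n} to the $2\times n$ subarray formed by rows $r_1$ and $r_2$: the two 1D row encodings required by that theorem are already stored, so only the $4n$-bit ``joint'' component is new per pair. Summed over the $\binom{m}{2}$ pairs, this adds $4n\binom{m}{2}=2nm(m-1)$ bits, for a total of $mS(n,k)+2nm(m-1)$ bits. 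By Theorem~\ref{thm:2n}, the encoding answers sorted 4-sided $\topk{}(r_1,r_2,a,b,A)$ on any pair of rows $r_1<r_2$ and any column range $[a,b]$.

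To answer a sorted query $\topk{}(i,j,a,b,A)$ on an arbitrary row range $[i,j]$, I will use the following monotonicity observation: if $(r,c)$ is among the top-$k$ of $A[i\ldots j][a\ldots b]$, then for every other row $r'\in[i,j]\setminus\{r\}$, the position $(r,c)$ is still among the top-$k$ of the two-row subarray on rows $\{r,r'\}$ and columns $[a,b]$, since deleting rows can only decrease its rank. Therefore the global top-$k$ is contained in
\[
U \;=\; \bigcup_{i\le r_1<r_2\le j}\topk{}(r_1,r_2,a,b,A),
\]
a set of at most $k\binom{j-i+1}{2}$ positions, each retrieved in sorted order from its pairwise list. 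Moreover, for any two positions $p=(r_p,c_p)$ and $q=(r_q,c_q)$ belonging to the \emph{global} top-$k$, both lie in $\topk{}(r_p,r_q,a,b,A)$ (or both in the same row's sorted 2-sided \topk{} list), so their relative order is determined by the encoding.

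Given this, extract the sorted top-$k$ incrementally. After $p_1,\ldots,p_{l-1}$ have been output, the next answer $p_l$ is the largest remaining element in $A[i\ldots j][a\ldots b]$. The key claim is that $p_l$ is the current ``head'' of the sorted pairwise list $\topk{}(r_1,r_2,a,b,A)$ whenever $r_{p_l}\in\{r_1,r_2\}$ (after deleting the already-output elements), while heads of pairs disjoint from $r_{p_l}$ have strictly smaller value. This follows because any element larger than $p_l$ is in the global top-$(l-1)$ and hence already extracted. Thus $p_l$ can be identified by a tournament among the current pairwise heads, each pairwise comparison being resolved by the appropriate pair's sorted list whenever both contestants belong to the global top-$k$. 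Iterating $k$ times yields the sorted answer.

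The main obstacle is formalising the merging procedure: one must argue that every comparison the tournament demands between two current heads either appears directly in some pairwise list or can be decided without an explicit comparison (for instance, by observing that any head not in the pair's sorted list is dominated by any head that is). The monotonicity observation, together with the invariant that the true next answer $p_l$ dominates every other current head, is exactly what is needed to make this bookkeeping consistent. The space accounting and the appeal to Theorem~\ref{thm:2n} are routine; the substance of the argument is this merging step.
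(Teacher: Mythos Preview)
Your space accounting is correct and identical to the paper's: $m$ row encodings plus the $4n$-bit joint component of Theorem~\ref{thm:2n} for each of the $\binom{m}{2}$ row pairs. The difference is in the merging step, and the obstacle you flag is real. A tournament on \emph{pairwise} heads can force a comparison that no stored encoding resolves: with $k=2$ and four rows, after extracting the global maximum (say in row~1) you may need to compare the current head of pair $(1,3)$, lying in row~1, with the current head of pair $(3,4)$, lying in row~4; neither of these need lie in the top-$2$ of pair $(1,4)$, so that pair's encoding tells you nothing, and no other pair contains both rows.

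The paper sidesteps this by merging \emph{row by row} rather than pair by pair. It maintains one pointer $c_r$ per row $r\in[i,j]$, initially $c_r=1$; the candidate from row $r$ is $\alpha_{ab}(r,c_r)$, the position of the $c_r$-th largest element in $A[r][a\ldots b]$, which the row encoding supplies. At each of the $k$ steps it finds the maximum among the current row candidates, outputs it, and increments that row's pointer. The single lemma that makes every comparison resolvable is this: when comparing the candidates from rows $r$ and $r'$, at least one of them lies in $\topk{}(1,2,a,b,A_{rr'})$. Indeed, $(c_r-1)+(c_{r'}-1)$ elements from rows $r$ and $r'$ have already been output, and fewer than $k$ elements have been output in total, so $c_r+c_{r'}\le k+1$; if the pair's top-$k$ contains $f$ elements from row $r$ and $k-f$ from row $r'$, then ``neither candidate in the pair's top-$k$'' would force $c_r\ge f+1$ and $c_{r'}\ge k-f+1$, contradicting $c_r+c_{r'}\le k+1$. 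Once one candidate is known to be in the pair's sorted top-$k$ list, the comparison is immediate (the other candidate is either also in the list, or below everything in it). This row-wise merge is what your sketch is groping toward; switching from pairwise heads to row heads removes the gap cleanly.
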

\begin{proof}
	For $1 \le i \le j \le n$ and $1 \le a \le b \le m$, we first use $mS(n, k)$ bits to support sorted 2-sided $\topk{}(a,a,i, j, A)$ queries. Also we encode the answer $\topk(1, 2, i, j, A_{ab})$ queries on $2 \times n$ array $A_{ab}$, whose first and second row are $a$-th and $b$-th row in $A$ respectively. By Theorem~\ref{thm:2n}, we can encode such queries on all possible $A_{ab}$ arrays using $2nm(m-1)$ extra bits. 
	For $1 \le a \le m$ and $1 \le \ell \le k$, let $\alpha_{ij}(a,\ell)$ be the position of $\ell$-th largest element in $A[a][i \dots j]$ Note that we can find such $\alpha_{ij}(a,\ell)$ using $\topk{}(a,a,i,j, A)$ query. 
	
	Now we describe how to answer $\topk{}(a,b,i,j,A)$ query. We first define $(b-a+1)$ values $c_a \dots c_b$ and set $c_a = c_{a+1} \dots = c_b =1$. After that, we find a position of largest value in $A[a \dots b][i \dots j]$ by comparing $\alpha_{ij}(a,c_a), \alpha_{ij}(a+1,c_{a+1}) \dots \alpha_{ij}(b,c_{b})$ and find a position with largest element among them. 
	It is clear that for $a \le a' \le b' \le b$, we can compare values at the position $\alpha_{ij}(a',c_{a'})$ and $\alpha_{ij}(b',c_{b'})$ using $\topk{(1, 2, i, j, A_{a'b'})}$ query since at least one of the their corresponding positions in $A_{a'b'}$ is an answer of $\topk{(1, 2, i, j, A_{a'b'})}$ query. 
	Suppose for $a \le a' \le b$, $\alpha_{ij}(a',c_{a'})$ a position with the largest value in $A[a \dots b][i \dots j]$. Then we increase $c_{a'}$ by 1, and compare $\alpha_{ij}(a',c_{a'})$ and $\alpha_{ij}(b',c_{b'})$ again to find a position of the second largest value in $A[a \dots b][i \dots j]$. 
	We do this procedure iteratively until we find a position of $k$-th largest value in $A[a \dots b][i \dots j]$.
	\qed
\end{proof}

\begin{corollary}\label{col:top22n}
Given an $m \times n$ array $A$, if there exists an $S(n)$-bit encoding to answer sorted 2-sided $\topt{}$ queries on a 1D array of size $n$, then we can encode $A$ in $mS(n) + 1.5nm(m-1)$ bits, to support sorted 4-sided $\topt{}$ queries on $A$. 
\end{corollary}

%%%%%%%%%%%%%%%%%%%%%%%%%%%%%%%%%%%5
Finally, if we combine the encoding of Theorem~\ref{thm:mn} and Gawrychowski and Nicholson's $(\lg{{(k+1)n \choose n}}+o(n))$-bit optimal encoding for sorted 2-sided  \topk{} queries on a 1D array~\cite{DBLP:conf/icalp/GawrychowskiN15}, we obtain an encoding as follows.

\begin{corollary}\label{col:mn}
	Given an $m \times n$ array $A$, there exists an $(m\lg{{(k+1)n \choose n}}+2nm(m-1)+o(n))$-bit encoding, to support sorted 4-sided $\topk{}$ queries on $A$.  Also when $k=2$,  there exists an $(m\lg{{3n \choose n}}+1.5nm(m-1)+o(n))$-bit encoding, to support sorted 4-sided $\topt{}$ queries on $A$
\end{corollary}
\section{Data structure for 4-sided $\topk{}$ queries on $2 \times n$ array. }\label{sec:2nds}
The encoding of Theorem~\ref{thm:2n} shows that $4n$ bits are sufficient for answering $\topk{}$ queries whose range spans both rows, when encodings for answering sorted 2-sided $\topk{}$ queries for each row are given. 
However, this encoding does not support queries efficiently (takes $O(k^2n^2+knT(n,k))$ time) since we need to reconstruct all the nodes in $D_A$ to answer a query (in the worst case). 
We now show that the query time can be improved to $O(k^2+kT(n,k))$ time 
if we use $(4k+7)n+ko(n)$ additional bits.
Note that if we simply use the data structure of Grossi et al.~\cite{DBLP:journals/talg/GrossiINRR17} 
(which takes $44 n\lg{k}+O(n\lg{\lg{k}})$ bits to encode a 1D array of length $n$ to support $\topk{}$ queries in $O(k)$ time)
on the 1D array of size $2n$ obtained by writing the values of $A$ in column-major order, 
we can answer $\topk{}$ queries on $A$ in $O(k)$ time using $88 n\lg{k}+O(n\lg{\lg{k}})$ additional bits. 
Although our data structure takes more query time 
and takes asymptotically more space, 
it uses less space for small values of $k$ (note that $4k+7 < 88\lg{k}$ for all integers $2 \le k < 160$) when $n$ is sufficiently large.
We now describe our data structure.

\begin{figure}[t]
\begin{center}
\includegraphics[clip, width=12cm]{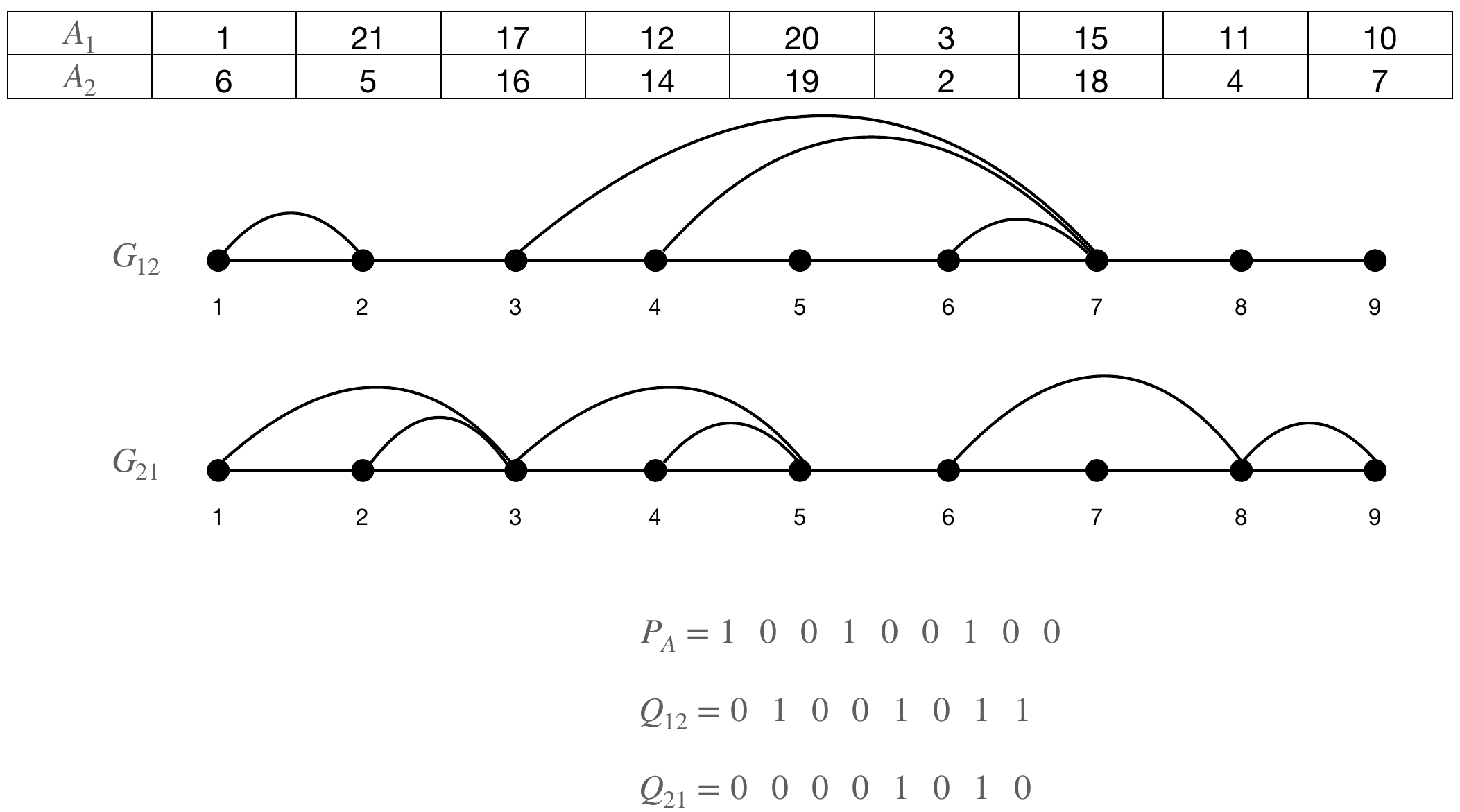}
\caption{$G_{12}$, $G_{21}$, $P_A$, $Q_{12}$, and $Q_{21}$ to support $\textsf{Top-}3$ queries on $2 \times n$ array $A$
}
\label{fig:kgraph}
\end{center}
\end{figure}

We first define a graph $G_{12} = (V(G_{12}), E(G_{12}))$ on $A$ as follows.
The set of vertices $V(G_{12})=\{1, 2, \dots n\}$, and there exists an edge $(i, j) \in E(G_{12})$
if and only if (i) $i < j$ and $A[1][i] < A[2][j]$, 
(ii) there are at most $k-1$ positions in $A[1,2][i \dots j]$ whose corresponding values are larger than both $A[1][i]$ and $A[2][j]$, 
and (iii) there is no vertex $j' \in \{i+1, i+2, \dots, n\}$ where $A[1][i] < A[2][j'] < A[2][j]$ and satisfies the condition (ii).
We also define a graph $G_{21}$ on $A$ which is analogous to $G_{12}$, by replacing $A[1][i]$, $A[2][j]$ and $A[2][j']$ with $A[2][i]$ $A[1][j]$, and $A[1][j']$, respectively in all three conditions. 
Each of the graphs $G_{12}$ and $G_{21}$ have $n$ vertices and at most $n$ edges. Also for any vertex $v \in V(G_{12})$ (resp., $V(G_{21})$), there exists at most one vertex $v'$ in $G_{12}$ (resp., $G_{21}$) such that $v$ is incident to $v'$ and $v < v'$. See Figure~\ref{fig:kgraph} for an example.
We now show that $G_{12}$ (thus, also $G_{21}$) 
is a \textit{$k$-page graph}, i.e. there exist no $k+1$ edges 
$ (i_1, j_1) \dots (i_{k+1}, j_{k+1}) \in E(G_{12})$ such that 
$i_1 < i_2  \dots <i_{k+1} < j_1 < j_2 \dots <j_{k+1}$.

\begin{lemma}\label{lem:kpage}
	Given $2 \times n$ array $A$, a graph $G_{12}$ on $A$ is $k$-page graph.
\end{lemma}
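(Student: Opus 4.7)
The plan is to argue by contradiction. Suppose $G_{12}$ contains $k+1$ rainbow edges $(i_1,j_1),\dots,(i_{k+1},j_{k+1})$ with $i_1<\cdots<i_{k+1}<j_1<\cdots<j_{k+1}$, and write $\alpha_t=A[1][i_t]$, $\beta_t=A[2][j_t]$, so that $\alpha_t<\beta_t$ by condition~(i). Among these $k+1$ edges I would pick the one whose row-$2$ endpoint value is smallest: let $t^*$ be the index minimizing $\beta_{t^*}$, so that (by distinctness of values) $\beta_s>\beta_{t^*}$ for every $s\neq t^*$. The goal is to exhibit $k$ distinct positions inside $A[1,2][i_{t^*}\dots j_{t^*}]$ whose values all strictly exceed $\max(\alpha_{t^*},\beta_{t^*})=\beta_{t^*}$, contradicting condition~(ii) for the edge $(i_{t^*},j_{t^*})$.

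The $k$ candidate positions split into two families: the row-$2$ positions $\{(2,j_s):s<t^*\}$ and the row-$1$ positions $\{(1,i_s):s>t^*\}$. They are pairwise distinct (the two families live in different rows), and together they contribute $(t^*-1)+(k+1-t^*)=k$ positions. Each lies inside $[i_{t^*},j_{t^*}]$ by the interleaving $i_1<\cdots<i_{k+1}<j_1<\cdots<j_{k+1}$. The row-$2$ candidates are immediate: for $s<t^*$ the value at $(2,j_s)$ is $\beta_s$, which exceeds $\beta_{t^*}$ by the choice of $t^*$. What remains is the row-$1$ inequality $\alpha_s>\beta_{t^*}$ for every $s>t^*$, and this is the heart of the argument, using condition~(iii).

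Suppose, toward contradiction, that $\alpha_s<\beta_{t^*}$ for some $s>t^*$. Consider the candidate pair $(i_s,j_{t^*})$: since $i_s<j_1\le j_{t^*}$ and $\alpha_s<\beta_{t^*}=A[2][j_{t^*}]$, it satisfies condition~(i), and $\alpha_s<\beta_{t^*}<\beta_s$ by the choice of $t^*$. Applying condition~(iii) to the edge $(i_s,j_s)$ then forces $(i_s,j_{t^*})$ to fail condition~(ii)---otherwise $j_{t^*}$ would be a strictly better target for $i_s$ than $j_s$. Hence there are at least $k$ positions in $[i_s,j_{t^*}]\subseteq[i_{t^*},j_{t^*}]$ whose values exceed $\max(\alpha_s,\beta_{t^*})=\beta_{t^*}$, already contradicting condition~(ii) for the witness edge $(i_{t^*},j_{t^*})$ on its own. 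Thus $\alpha_s>\beta_{t^*}$ for all $s>t^*$, the $k$ candidate positions work as advertised, and condition~(ii) for $(i_{t^*},j_{t^*})$ is violated. The main obstacle is precisely this row-$1$ step; the choice of $t^*$ as the $\beta$-minimizer is exactly what keeps the inequality $\beta_{t^*}<\beta_s$ available to unlock condition~(iii), and without this careful selection the cases $t^*=1$ or $t^*=k+1$ (with empty row-$2$ or row-$1$ families) would be much harder to rule out.
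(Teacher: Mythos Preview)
Your proof is correct and follows the same overall contradiction-via-pivot strategy as the paper, but with a different pivot choice and more detail. The paper selects the index $t$ that minimizes the row-$1$ value $\alpha_t=A[1][i_t]$ and then asserts, ``by the definition of $G_{12}$,'' that the $k$ positions $(1,i_{t+1}),\dots,(1,i_{k+1}),(2,j_1),\dots,(2,j_{t-1})$ all carry values exceeding both $\alpha_t$ and $\beta_t$; it does not spell out why these values exceed $\beta_t$, and in particular it never explicitly invokes condition~(iii). You instead minimize the row-$2$ value $\beta_{t^*}$, which makes the row-$2$ witnesses $(2,j_s)$ for $s<t^*$ immediate by minimality, and you then handle the row-$1$ witnesses via an explicit appeal to condition~(iii): if some $\alpha_s<\beta_{t^*}$ with $s>t^*$, the minimality of $\beta_{t^*}$ gives $\alpha_s<\beta_{t^*}<\beta_s$, so $j_{t^*}$ would undercut $j_s$ unless $(i_s,j_{t^*})$ fails~(ii), and that failure already produces $k$ large positions inside $[i_{t^*},j_{t^*}]$. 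This dichotomy cleanly closes both cases. The upshot is that your pivot choice is the ``dual'' of the paper's and buys you a fully justified argument; the paper's version is terser but leaves the $\beta_t$-comparison step to the reader.
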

\begin{proof}
	Suppose that there are $k+1$ edges $ (i_1, j_1) \dots (i_{k+1}, j_{k+1}) \in E(G_{12})$ such that 
	$i_1 < i_2  \dots <i_{k+1} < j_1 < j_2 \dots <j_{k+1}$, and 
	for $1 \le t \le k+1$, let $i_t$ be a position of the minimum element in $A_1[i_1 \dots i_{k+1}]$.
	Then by the definition fo $G_{12}$, there are at least $k$ positions 
	$(1,i_{t+1}), \dots, (1, i_{k+1}), (2,j_1), \dots, (2,j_{t-1}) $ in $A[1,2][i_t \dots j_t]$ 
	whose corresponding values in $A$ are larger than both 
	$A[1][i_t]$ and $A[2][j_t]$, which contradicts the definition of $G_{12}$.
	\qed
\end{proof}
From the above lemma and the succinct representation of $k$-page graphs of 
Munro and Raman~\cite{mr-sjc01} (with minor modification as described in~\cite{DBLP:conf/cpm/GawrychowskiN15}), 
we can encode $G_{12}$ and $G_{21}$ using $(4k+4)n+k \cdot o(n)$ bits in total, and
for any vertex $v$ in $V(G_{12}) \cup V(G_{21})$, we can find a vertex with the largest index
which incident to $v$ in $O(k)$ time.
Also to compare the elements in the same column, we maintain a bit string $P_A[1 \dots n]$ of size $n$ such that for $1 \le i \le n$, $P_A[i] = 0$ if and only if $A[1][i] > A[2][i]$.
Finally, for $G_{12}$ (resp., $G_{21}$), we maintain 
another bit string $Q_{12}[1 \dots n-1]$ (resp., $Q_{21}[1 \dots n-1]$) 
%of size $n$
such that for $1 \le i \le n-1$, $Q_{12}[i] = 1$ (resp., $Q_{21}[i] = 1$) if and only if all elements in $A_2[i+1 \dots n]$ (resp., $A_1[i+1 \dots n]$) are smaller than
$A[1][i]$ (resp., $A[2][i]$) (see Figure~\ref{fig:kgraph} for an example).
We now show that if there is an encoding which can answer the sorted $\topk{}$ queries on each row, then the encoding of $G_{12}$, $G_{21}$, and 
the additional arrays defined above are enough to answer 4-sided $\topk{}$ queries on $A$.
%The proof is omitted due to space limitation.

\begin{theorem}\label{thm:ds2n}
	Given a $2 \times n$ array $A$, if there exists an $S(n, k)$-bit encoding to answer sorted 2-sided $\topk{}$ queries on a 1D array of size $n$ in $T(n,k)$ time, then there is a $2S(n, k) + (4k+7)n+k \cdot o(n)$-bit data structure which can answer $\topk{}$ queries on $A$ in $O(k^2 + kT(n,k))$ time.
\end{theorem}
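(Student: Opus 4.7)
The plan is to supplement the two sorted $2$-sided $\topk{}$ encodings on the rows (using $2S(n,k)$ bits and supporting each row's sorted $\topk{}$ in $T(n,k)$ time) with the succinct Munro--Raman encodings of the $k$-page graphs $G_{12}$ and $G_{21}$ (using $(4k+4)n+ko(n)$ bits in total and supporting the largest-index-incident-vertex query in $O(k)$ time), together with the bit-strings $P_A$, $Q_{12}$, and $Q_{21}$ of total length $3n-2$. The total space is then $2S(n,k)+(4k+7)n+ko(n)$ bits.

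To answer a sorted $4$-sided $\topk{}(1,2,a,b)$ query, I first extract the two sorted length-$k$ lists $L_1 = ((1,x_1), \dots, (1,x_k))$ and $L_2 = ((2,y_1), \dots, (2,y_k))$ from $A_1[a \dots b]$ and $A_2[a \dots b]$ using the row encodings, in $2T(n,k)$ time. I then merge them to produce the top-$k$ of $A[1,2][a \dots b]$ in sorted order, which requires at most $k$ cross-row comparisons of a position $(1,x)$ with a position $(2,y)$ where both are already known to lie in the query's top-$k$ answer set.

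The crux is a subroutine that, given $(1,x)$ and $(2,y)$ both in the query's top-$k$, decides whether $A[1][x]>A[2][y]$ in $O(k+T(n,k))$ time. When $x=y$, use $P_A[x]$. When $x<y$ (the case $x>y$ is symmetric, using $G_{21}$ and $Q_{21}$), look up the unique forward neighbor $y^*$ of $x$ in $G_{12}$ in $O(k)$ time via the Munro--Raman structure. The key observation is that since both $(1,x)$ and $(2,y)$ lie in the query's top-$k$ of $[a,b]$, condition~(ii) in the definition of $E(G_{12})$ automatically holds for the pair $(x,y)$: any position in $A[1,2][x \dots y]$ larger than both $A[1][x]$ and $A[2][y]$ must exceed whichever of the two is larger, and at most $k-1$ such exceeders can exist without contradicting that value's membership in the query's top-$k$. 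Hence membership $(x,y)\in E(G_{12})$ is governed solely by conditions~(i) and~(iii), and so is decidable from the forward-neighbor query. If $y^*$ does not exist (witnessed by $Q_{12}[x]=1$, or by the returned largest-index-incident-vertex being smaller than $x$), then $A[1][x]>A[2][y]$; if $y^*=y$, then $A[1][x]<A[2][y]$; and if $y^*\ne y$ exists, then by definition $A[2][y^*]$ is the minimum row-$2$ value strictly exceeding $A[1][x]$ among positions satisfying~(ii), so the comparison of $A[1][x]$ and $A[2][y]$ is determined by whether $A[2][y^*]<A[2][y]$ or $A[2][y^*]>A[2][y]$. That single same-row comparison reduces to an $\RMQ{}$ query on $A_2$ over the range spanned by $y$ and $y^*$, which is answered by the row-$2$ encoding in $O(T(n,k))$ time.

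Summing up, the query cost is $2T(n,k)$ for the two extractions plus $O(k)$ comparisons of cost $O(k+T(n,k))$ each, for the claimed $O(k^2+kT(n,k))$ total time. The main obstacle I anticipate is the case analysis in the cross-row comparison subroutine, particularly the ``$y^*\ne y$'' sub-case: one must verify that reducing it to a single same-row comparison on $A_2$ is sound regardless of whether $y^*$ lies inside or outside $[a,b]$, and that the auxiliary bit-strings $P_A$, $Q_{12}$, and $Q_{21}$, together with the forward-neighbor queries, cleanly cover all boundary situations such as the absence of a qualifying row-$2$ successor of $A[1][x]$ or $x$ and $y$ being equal.
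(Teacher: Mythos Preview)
Your construction, space accounting, query algorithm, and case analysis are essentially identical to the paper's: two row encodings, the Munro--Raman $k$-page encodings of $G_{12}$ and $G_{21}$, the three bit-arrays $P_A,Q_{12},Q_{21}$, and a $k$-step merge in which each cross-row comparison is reduced, via the forward neighbour $y^*$ of $x$ in $G_{12}$ (or $G_{21}$), to a single same-row comparison in $A_2$ (or $A_1$).

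There is one technical slip in that last step. You write that comparing $A[2][y]$ with $A[2][y^*]$ ``reduces to an $\RMQ$ query on $A_2$ over the range spanned by $y$ and $y^*$,'' but an $\RMQ$ on $[\min(y,y^*),\max(y,y^*)]$ may return a third position strictly between them, in which case it says nothing about the relative order of $A[2][y]$ and $A[2][y^*]$. The paper instead uses a sorted $\topk$ query on row~$2$ together with the observation that at least one of the two positions must appear in its answer: when $y^*\in[a,b]$ one scans the already-computed list $L_2=\topk(a,b,A_2)$, in which $y=b_q$ sits at rank~$q$; when $y^*>b$ one queries $\topk(y,y^*,A_2)$ and uses condition~(ii) of the edge $(x,y^*)$ to see that fewer than $k$ positions of $A_2[x\dots y^*]$ (hence of $A_2[y\dots y^*]$) exceed $A_2[y^*]$, so $y^*$ appears. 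With this correction the comparison indeed costs $O(k+T(n,k))$. A smaller imprecision: in the merge you assert that \emph{both} $(1,x)$ and $(2,y)$ lie in the overall top-$k$ of the query range, whereas only the larger of the two is guaranteed to (it is the $t$-th largest for some $t\le k$). Your own argument for condition~(ii) actually only needs the larger one, so the conclusion is unaffected; the paper phrases the same bound as ``at most $(p-1)+(q-1)\le k-1$ positions exceed both,'' using $p+q\le k+1$.
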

\begin{proof}
	For $1 \le i \le j \le n$, we first use $2S(n, k)$ bits to support sorted 2-sided $\topk{}(1,1,i, j, A)$ and $\topk{}(2,2,i,j,A)$ queries in $T(n,k)$ time. 
	%We are enough to show that for $1 \le i \le j \le n$, 
	%we can answer $\topk{}(1,2,i,j,A)$  query in $O(k^2 + T(n,k))$ time. 
	To answer $\topk{}(1,2,i,j,A)$ query,
	we maintain succinct representations of $G_{12}$ and $G_{21}$~\cite{mr-sjc01,DBLP:conf/cpm/GawrychowskiN15}
	using $(4k+4)n+ko(n)$ bits, and $P_A$, $Q_{12}$, and $Q_{21}$ using $3n$ bits. 
	Now for $1 \le p \le k$, let $a_p$ (resp., $b_p$) be the position of the $p$-th largest value in $A_1[i \dots j]$ 
	(resp., $A_2[i \dots j]$), which can be answered in $O(T(n,k))$ time using the encoding of $\topk{}$ queries on each row. 
	%To answer the $\topk{}(1,2,i,j,A)$ query, 
	We first find the position of the largest value in $A[1,2][i \dots j]$ 
	by comparing $A[1][a_1] $ and $A[2][b_1]$.
	If $A[1][a_1] < A[2][b_1]$(resp.,$A[1][a_1] > A[2][b_1]$), we compare $A[1][a_1]$ with $A[2][b_2]$ (resp., $A[1][a_2]$ with $A[2][b_1]$) 
	to find the position of the second-largest value in $A[1,2][i \dots j]$. 
	By repeating this procedure iteratively $k$ times, we can answer the $\topk{}(1,2,i,j,A)$ query.
	
	Now we describe how to compare $A[1][a_p]$ with $A[2][b_q]$, for all $p+q \le k+1$ 
	(in the above procedure, we do not need to compare $A[1][a_p]$ with $A[2][b_q]$ if $p+q > k+1$). 
	If $a_p = b_q$, the result of the comparison is already stored in the bit $P_A[a_p]$. 
	Now suppose that $a_p < b_q$ (if $a_p > b_q$, we use $G_{21}$ and $Q_{21}$ instead of $G_{12}$ and $Q_{12}$ respectively, in the following procedure), %All other steps are analogous), 
	and let $a'_p$ be a vertex with the largest index in $G_{12}$ which is incident to $a_p$, if it exists.
	Note that we can find such $a'_p$  in $O(k)$ time~\cite{mr-sjc01,DBLP:conf/cpm/GawrychowskiN15}.
	If there is no vertex incident to $a_p$ or $a'_p < a_p$, we show that $A[1][a_p] > A[2][b_q]$ by considering the following two cases.
	\begin{itemize}
		\item {i) $Q_{12}[a_p] =  1$:}
		From the definition of $Q_{12}$, it follows that $A[1][a_p] > A[2][b_q]$.
		\item {ii) $Q_{12}[a_p] =  0$:}
		In this case, (a) $A[1][a_p] < A[2][b_q]$, but 
		there are at least $k$ positions in $A[1,2][a_p \dots b_q]$ whose corresponding values are larger than both $A[1][a_p]$ and $A[2][b_q]$
		or (b) $A[1][a_p] > A[2][b_q]$. 
		However (a) cannot hold 
		since there are at most $(p-1)+(q-1) \le k-1$ positions in $A[1,2][a_p \dots b_q]$ whose corresponding values are larger than both $A[1][a_p]$ and $A[2][b_q]$. 
		Therefore $A[1][a_p] > A[2][b_q]$.
	\end{itemize}
	Now consider the case $a_p < a'_p$.
	If $a'_p \le b_q$, then $A[1][a_p] < A[2][b_q]$ if and only if $A[2][a'_p] < A[2][b_q]$ 
	by the definition of $G_{12}$.
	If $a'_p > b_q$, we first compare $A[2][a'_p] $ with $A[2][b_q]$. 
	If $A[2][a'_p] < A[2][b_q]$, then $A[1][a_p] < A[2][b_q]$ by the definition of $G_{12}$. 
	If not, (a) $A[1][a_p] < A[2][b_q]$, but there are at least $k$ positions in $A[1,2][a_p \dots b_q]$ whose corresponding value is larger than both $A[1][a_p]$ and $A[2][b_q]$, or (b) $A[1][a_p] > A[2][b_q]$. However, (a) cannot hold by the same reason as the case 
	when there is no vertex incident to $a_p$ or $a'_p < a_p$, and $Q_{12}[a_p] =  0$. 
	Therefore $A[1][a_p] > A[2][b_q]$ if $A[2][a'_p] > A[2][b_q]$. 
	Also since $(2, b_q)$ is one of the answers of $\topk{}(i,j, A_2)$ query, 
	we can compare $A[2][a'_p]$ with $A[2][b_q]$ in $T(n,k)$ time using the $\topk{}$ encoding on 
	the second row.
	By the procedure describe above, 
	each iteration step takes at most $O(k+T(n,k))$ time, thus we can answer
	$\topk{}(1,2,i,j.A)$ query in $O(k^2+kT(n,k))$ time. 
	\qed
\end{proof}

\section{Lower bounds for encoding range \topk{} queries on $2 \times n$ array}
\label{sec:lower}
In this section, we consider the lower bound on space for encoding a $2 \times n$ array $A$ to support unsorted 1-sided and sorted 4-sided $\topk{}$ queries, when $k > 1$. 
Specifically for $1 \le i \le j \le n$, we consider to lower bound on extra space for answering 
i) unsorted $\topk{}(1, 2, 1, i)$ queries, assuming that we have access to the encodings of the individual rows of $A$ that can answer 
unsorted 1-sided (or 2-sided) $\topk{}$ queries, and ii) sorted $\topk{}(1, 2, i, j)$ queries, assuming that we have access to the encodings of the individual rows of $A$ that can answer sorted 2-sided $\topk{}$ queries.
We show that for answering unsorted 1-sided queries on $A$, at least $1.27n-o(n)$ (or $2n-O(\lg{n})$) extra bits are necessary, and for answering unsorted or sorted 4-sided queries on $A$, at least $2n-O(\lg{n})$ extra bits are necessary.

For simplicity (to avoid writing floors and ceilings, and to avoid considering some boundary cases), we assume that $k$ is even. 
(Also, if $k$ is odd we can consider the lower bound on extra space for answering 4-sided $\topk{}$ queries as the lower bound of extra space for answering 4-sided $\topm{}$ queries -- it is clear that former one requires more space.) 
For both unsorted and sorted query cases, we assume that all elements in $A$ are distinct, and come from the set $\{1, 2, \dots 2n\}$; and also that each row in $A$ is sorted in the ascending order. 
Finally, for $1 \le \ell \le 2n$, we define the mapping $A^{-1}(\ell) = (i, j)$ if and only if $A[i][j] = \ell$.\\

\noindent
{\bf Unsorted 1-sided \topk{} query. }
The following theorem gives the lower bound for answering unsorted 1-sided $\topk{}$ queries on $2 \times n$ array $A$, when the encodings for answering unsorted 1-sided (or 2-sided) \topk{} queries on both rows are already given. Note that this lower bound also gives the lower bound for answering 
unsorted 4-sided $\topk{}$ queries on $2 \times n$ array under the same condition.

\begin{theorem}\label{lower:unsorted}
	Given a $2 \times n$ array $A$ and encodings for answering unsorted 1-sided (or 2-sided) \topk{} queries on both rows in $A$, at least $\ceil{(n-k/2)\lg{(1+\sqrt{2})}}-o(n) = 1.27(n-k/2)-o(n)$ additional bits are necessary for answering unsorted 1-sided \topk{} queries on $A$. 
\end{theorem}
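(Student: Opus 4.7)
The plan is to establish the Pell-like recurrence $u_i \geq 2 u_{i-1} + u_{i-2}$ for $u_i := |U_i|$, which, combined with the base cases $u_1 = 1$ and $u_2 = 3$ computed above, yields $u_i = \Omega((1+\sqrt 2)^i)$; a standard information-theoretic reduction then converts this into the claimed lower bound on additional space when $i = n - k/2$.

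I would first observe that the per-row encodings carry no discriminating information within $U_i$: every $B \in U_i$ has both rows sorted in ascending order, so the answer to any unsorted or sorted 1-sided \topk{} query on a single row depends only on the index of the query, not on the stored values, and the minimum row encodings may therefore be chosen to emit the same bit string for every member of $U_i$. Coupled with the paper's already-stated distinguishing claim---that two members of $U_i$ whose pair sequences first diverge at index $j$ produce different answers to $\topk{}(1,2,1,k/2+j)$, because the $2j$ smallest values in the query range are forced to be $\{1, \dots, 2j\}$ by the defining properties of $U_i$ and so the excluded positions equal $P_1 \cup \cdots \cup P_j$---the theorem reduces to $\log_2 u_i \geq (n-k/2)\lg(1+\sqrt 2) - o(n)$.

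For the combinatorial bound, I would introduce an auxiliary function $f(a,b)$ that counts pair-equivalence classes on a skewed $2$-row grid whose row~$1$ has length $a$, row~$2$ has length $b$, each row is sorted in ascending order, and the values are $\{1, \dots, a+b\}$, so that $u_i = f(i,i)$. A case analysis on the position of the pair containing the largest value $a+b$ forces it into exactly one of the three configurations $\{(1,a),(2,b)\}$, $\{(1,a-1),(1,a)\}$, or $\{(2,b-1),(2,b)\}$, since every mixed placement would force a cell adjacent to the pair to take a value strictly exceeding $a+b$. This yields $u_i = u_{i-1} + 2 f(i-2,i)$ together with
\[
f(a,b) = f(a-1,b-1) + [a \geq 2]\,f(a-2,b) + [b \geq 2]\,f(a,b-2).
\]
Specializing to $(a,b) = (i-2, i)$, discarding the non-negative $f(i-4,i)$ term, and substituting the level-$(i-1)$ identity $f(i-3,i-1) = (u_{i-1} - u_{i-2})/2$ into the expression for $u_i$ gives $u_i \geq u_{i-1} + (u_{i-1} - u_{i-2}) + 2 u_{i-2} = 2 u_{i-1} + u_{i-2}$. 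The characteristic polynomial $x^2 - 2x - 1$ has dominant root $1+\sqrt 2$, so $u_i = \Omega((1+\sqrt 2)^i)$ and hence $\log_2 u_i \geq i\,\lg(1+\sqrt 2) - O(1)$; substituting $i = n - k/2$ finishes the proof.

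The main obstacle I anticipate is the exhaustive case analysis ruling out every mixed placement of the largest pair in both the original square grid and the skewed subproblem: each candidate pair must be excluded by observing that row sortedness would force a neighbouring cell to take a value outside the grid's value set. Once the three allowed configurations are confirmed, both recurrences follow directly, and the algebra eliminating $f$ and the asymptotic analysis of the linear recurrence are routine.
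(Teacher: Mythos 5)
Your combinatorial machinery is clean and the algebra is right: the three-way split on the top pair, the auxiliary $f(a,b)$ on skewed grids, and the elimination $u_i = u_{i-1} + 2f(i-2,i) \ge 2u_{i-1}+u_{i-2}$ all check out as a count of pair-equivalence classes, and your observation that the sorted rows make the per-row encodings uninformative is a necessary point the paper glosses over. But there is a genuine gap in what you are counting. Your $f(i,i)$ counts \emph{all} fillings up to pair-equivalence, i.e.\ all lattice paths from $(0,0)$ to $(i,i)$ with steps $(1,1)$, $(2,0)$, $(0,2)$; the lower bound, however, needs a family that is pairwise distinguishable by the \emph{queries}. The distinguishing claim you invoke --- that the excluded positions of $\topk{}(1,2,1,k/2+j)$ equal $P_1\cup\cdots\cup P_j$ --- requires all of $\{1,\dots,2j\}$ to lie in columns $1,\dots,k/2+j$, i.e.\ both row-prefixes after $j$ pairs to have length at most $k/2+j$. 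The defining property of $U_i$ only forces this at $j=i$; for $k/2 < j < i-k/2$ a sufficiently unbalanced path violates it, and then the answer to $\topk{}(1,2,1,k/2+j)$ degenerates to a fixed set independent of the configuration. Concretely, for $k=2$ the two fillings with pair sequences $\{(1,1),(1,2)\},\{(1,3),(1,4)\},\{(2,1),(2,2)\},\dots$ and $\{(1,1),(1,2)\},\{(1,3),(2,1)\},\{(1,4),(2,2)\},\dots$ (agreeing thereafter) return identical answers to every 3-sided \topt{} query, yet they are distinct pair-classes and both enter your count. Indeed your count and the paper's already diverge at $i=4$ ($f(4,4)=19$ versus the Pell value $17$), and the surplus consists exactly of such indistinguishable configurations, so $\lg f(i,i)$ is not a valid lower bound on the effective entropy.

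The fix is to restrict to fillings whose two row-prefixes never differ by more than $2$ (equivalently, every pair $P_j$ sits in columns $j-1$ and $j$); this guarantees the balance condition for every $j$ since $k\ge 2$, and it is what the paper's recursive construction of the sets $E_B$ implicitly produces. Under that restriction your same three-way case analysis gives exactly the transfer-matrix recurrence $u_i = 2u_{i-1}+u_{i-2}$ (the discarded term $f(i-4,i)$ is genuinely zero in the band), and the rest of your argument goes through unchanged. So your route is salvageable and, once the band restriction is added, is arguably a cleaner derivation of the recurrence than the paper's case analysis on where the pair $\{B^{-1}(2i-3),B^{-1}(2i-2)\}$ lies; but as written the proposal proves a lower bound on the wrong quantity.
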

\begin{proof}
If $n \le k/2$ we do not need any extra space since all positions are answers of unsorted $\topk{}(1, 2, 1, i, A)$ queries for $i \le n$. Now suppose that $n > k/2$.
In this case, let $U_i$ be a set of all possible arrays of size $2 \times n$ which satisfies the following properties:

\begin{itemize}
    \item For any $B \in U_i$, all of $\{1, 2 \dots 2i\}$ are in $B[1,2][1 \dots i]$ and each row in $B$ is sorted in the ascending order (thus, all the arrays in $U_i$ have same encodings for answering unsorted 1 and 2-sided $\topk{}$ queries on their individual rows), and
    \item for any two distinct arrays $B , C \in U_i$, there exists $1 \le j \le i$ such that $\{B^{-1}(2j-1),B^{-1}(2j)\} \neq \{C^{-1}(2j-1), C^{-1}(2j)\}$. 
\end{itemize}

By the definition of $U_i$, for any $1 \le i \le n-k/2$ and two distinct arrays $B , C \in U_i$, there exists a position $1 \le j \le i$ where $B$ and $C$ have distinct answers of unsorted $\topk{}(1, 2, 1, k/2+j)$ queries, which implies $\log |U_n|$ gives the lower bound of additional space for answering the 1-sided $\topk{}$ queries on $2 \times n$ array.
We compute the size of $U_i$ as follows.
$|U_1| = 1$ since there exists only one case as $\{B^{-1}(1), B^{-1}(2)\} = \{(1,1), (2,1)\}$. 
For $i=2$, we can consider three cases as $(1,2,3,4)$, $(1,3,2,4)$, or $(1,4,2,3)$ if we write a elements of $B[1,2][1,2]$ for $B \in U_2$ in row-major order (note that each row is sorted in ascending order).

Next, consider the case when $2 < i \le n-k/2$. In this case for any $B \in U_{i-1}$, $\{B^{-1}(2i-3), B^{-1}(2i-2)\} \subset \{\{(1, i), (2, i)\}, \{(1, i-1), (1, i)\}, \{(2, i-1), (2, i)\}\} $. 
To construct arrays in $U_i$, we construct a set $E_B \subset U_i$ from the array in $B \in U_{i-1}$ such that for any $B_1 , B_2 \in E_B$ and $1 \le j < i+k/2$, $\topk{}(1,2,1,j,B_1) = \topk{}(1,2,1,j,B_2) = \topk{}(1,2,1,j,B)$ 
and $\topk{}(1,2,1,i+k/2,B_1) \neq \topk{}(1,2,1,i+k/2,B_2)$. It is clear that $|U_i| = \sum_{B \in U_{i-1}} |E_B|$. 
Now we consider two cases as follows.  

\begin{itemize}
	\item{}\textbf{Case 1. $B^{-1}(2i-3)$ and $B^{-1}(2i-2)$ are in different rows:}
	In this case, for any $C\in E_B$ the position of the first and the second largest value in $C[1,2][1 \dots 2(i-1)]$
	are $(1, i-1)$ and $(2, i-1)$ respectively and for any $B_1, B_2 \in E_B$, $\{B^{-1}_1(2i-1), B^{-1}_2(2i)\}$ are distinct.
	In this case, $|E_B| = 3$ since only $\{(1, i), (2, i)\}$, $\{(1, i-1), (1, i)\}$, or $\{(2, i-1), (2, i)\}$ can be $\{B^{-1}(2i-1), B^{-1}(2i)\}$  for any array $C \in E_B$ respectively, which satisfy the above condition and maintaining both rows in $C$ as sorted in ascending order.
	Furthermore, since both $B^{-1}(2i-3)$ and $B^{-1}(2i-2)$ are in $(i-1)$-th column, the number of $B \in U_{i-1}$ in this case is $|U_{i-2}|$. 
	
	\item{}\textbf{Case 2. $B^{-1}(2i-3)$ and $B^{-1}(2i-2)$ are in the same row:}
	Without loss of generality, assume that both $B^{-1}(2i-3)$ and $B^{-1}(2i-2)$ are in the first row. 
	Then for any $C\in E_B$ the position of the first and the second largest value in  $C[1,2][1 \dots 2(i-1)]$
	are $(1, i-2)$ and $(1, i-1)$ respectively, and for any $B_1, B_2 \in E_B$, $\{B^{-1}_1(2i-1), B^{-1}_2(2i)\}$ are distinct. 
	Therefore $|E_B| = 2$ in this case since only $\{(1, i), (2, i)\}$ or $\{(1, i-1), (1, i)\}$ can be $\{C(2i-1), C(2i)\}$ for any array $C \in E_B$ respectively 
	which satisfy the above condition and maintaining both rows in $C$ as sorted in ascending order. Also since all the $B$ is in the Case 1 or 2, the number of $B \in U_{i-1}$ in this case is $|U_{i-1} - U_{i-2}|$.
\end{itemize} 

By the statement described above, we obtain a recursive relation $|U_i| = 3|U_{i-2}| + 2(|U_{i-1}| - |U_{i-2}|)$. 
By solving the recursive relation using characteristic equation, we obtain $|U_i| \le (1+\sqrt{2})^i-o(n)$, which proves the theorem. 
%\qed
\qed
\end{proof}

\noindent
{\bf Sorted and 4-sided \topk{} query. }
In this case we divide a $2 \times n$ array $A$ into $2n/k$ blocks $A_1 \dots A_{2n/k}$ of size $2 \times k/2$ as
for $1 \le \ell \le k/2$, $A_{\ell}[i][j] = A[i][2(\ell-1)+j]$ and all values of $A_{\ell}$ are in $\{k(\ell-1)+1 \dots k\ell\}$. Then for any $2 \times n$ array $A$ and $A'$, sorted $\topk{}(1,2,k(i-1)/2+1, ki/2, A) \neq \topk{}(1,2,k(i-1)/2+1, ki/2, A')$, and $\topk{}(1,2,1, ki/2, A) \neq \topk{}(1,2,1, ki/2, A')$, if there exists a position $1 \le i \le 2n/k$ where $A_i \neq A'_i$. 
Let $S_{i}$ be the set of all possible arrays of size $2 \times i$ such that for any $B \in S_{i}$, all values of $B$ are in $\{1, 2i\}$ and both rows of $B$ are sorted in ascending order, which implies all the arrays in $S_i$ have same encodings for answering unsorted and sorted 2-sided $\topk{}$ queries on their individual rows.
Since the size of $S_{i}$ is same as \textit{central binomial number}, 
${2i} \choose {i}$, which is well-known as at least $4^i/\sqrt{4i}$~\cite{binomial}. 
Therefore, at least $\ceil{2n\lg|S_{k/2}|/k} \ge 2n -O(\lg{n})$ additional bits are necessary for answering sorted $\topk{}$ queries that span both the rows, when encodings for answering sorted (or unsorted) on both rows are given.

\begin{theorem}\label{lower:sorted}
	Given a $2 \times n$ array $A$, at least $2n-O(\lg{n})$ additional bits are necessary for answering sorted 4-sided \topk{} queries on $A$ if encodings for answering sorted (or unsorted) 2-sided \topk{} queries on both rows in $A$ are given.
\end{theorem}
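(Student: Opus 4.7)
The plan is to turn the counting sketch given immediately before the statement into a rigorous information-theoretic lower bound. I would define a family $\mathcal{F}$ of $2\times n$ arrays by partitioning the $n$ columns into $2n/k$ consecutive blocks of width $k/2$, insisting that both rows are sorted in ascending order, and forcing the $\ell$-th block $A_\ell$ to contain exactly the integers $\{k(\ell-1)+1,\dots,k\ell\}$. Inside a block the only freedom is which $k/2$ of the $k$ values go to the top row; once that choice is made the ascending-order constraint pins down the exact positions. Hence each block admits $|S_{k/2}|=\binom{k}{k/2}$ configurations, the blocks are independent, and $|\mathcal{F}|=\binom{k}{k/2}^{2n/k}$.

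Next I would verify that any two distinct members of $\mathcal{F}$ are distinguished by at least one sorted \topk{} query that involves both rows. For the 4-sided case, the query on the column interval $[k(\ell-1)/2+1,\,k\ell/2]$ returns all $k$ positions of block $A_\ell$ listed in decreasing order of value, so a different row-assignment inside $A_\ell$ produces a different list of (row,column) pairs. The 3-sided case reduces to the same observation: because the blocks carry disjoint value ranges with block $A_\ell$ holding the largest values of the prefix $[1,k\ell/2]$, the sorted answer to $\topk{}(1,2,1,k\ell/2,A)$ coincides with the sorted answer restricted to $A_\ell$, and so again differs whenever $A_\ell$ differs.

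The role of the hypothesis about the individual-row encodings is that, on this particular family, it can be made essentially free. Every row of every array in $\mathcal{F}$ is sorted ascending, so any sorted or unsorted 1-sided (for the 3-sided claim) or 2-sided (for the 4-sided claim) \topk{} query on a single row returns the last $k$ positions of the query range; the answer depends only on the endpoints, not on the array. A single fixed encoding of each row therefore serves every $A\in\mathcal{F}$, and the remaining information needed to pin down which element of $\mathcal{F}$ we are encoding must be stored in the additional bits. Thus the additional space is at least $\lg|\mathcal{F}|=(2n/k)\lg\binom{k}{k/2}$ bits, and combining with the standard central-binomial estimate $\binom{k}{k/2}\ge 4^{k/2}/\sqrt{2k}$ gives the claimed $2n-O(\lg n)$ lower bound.

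The main obstacle I anticipate is verifying the second step: one must check that the \emph{sorted} top-$k$ output (a sequence of positions, not a set) really does encode the row-assignment inside a block. The disjoint-value-range construction is chosen precisely so that the value-rank of each entry inside its block equals its value-rank inside the entire query range, so that different row-assignments must produce different sorted position sequences for both the prefix query (3-sided) and the block-exact query (4-sided). Once that bookkeeping is in place, the rest is the straightforward counting argument outlined above.
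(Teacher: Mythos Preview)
Your proposal is correct and follows exactly the paper's approach: the same partition into $2n/k$ blocks of width $k/2$ carrying disjoint value ranges, the same central-binomial count $\binom{k}{k/2}$ per block, and the same distinguishing queries (the block-exact range for the 4-sided case and the prefix ending at a block boundary for the 3-sided case). Your explicit observation that the individual-row encodings are constant across the family---because every row is globally sorted, so 1- and 2-sided row queries depend only on the endpoints---is a point the paper leaves implicit but which is precisely what justifies charging all of $\lg|\mathcal{F}|$ to the \emph{additional} bits.
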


\section{Conclusions and open problems}
In this paper, we proposed encodings for answering $\topk{}$ queries on 2D arrays. 
For $2 \times n$ arrays, we proposed upper and lower bounds on space for answering sorted and unsorted 4-sided  $\topk{}$ queries. 
Finally, we obtained an $(m\lg{{(k+1)n \choose n}}+2nm(m-1)+o(n))$-bit encoding for answering 4-sided sorted $\topk{}$
queries on $m \times n$ arrays. 
We end with the following open problems: 
\begin{itemize}
\item{} Can we support 4-sided sorted $\topk{}$ queries with efficient query time on $m \times n$ arrays
using less than $O(nm\lg{n})$ bits when $m = o(\lg{n})$?
\item{} Can we obtain an improved lower or upper bound for answering 4-sided sorted $\topk{}$ queries on $2 \times n$ arrays?
\end{itemize}
%\section*{References}

\bibliography{ref}

\begin{thebibliography}{10}

\bibitem{DBLP:journals/siamcomp/BerkmanV93}
Omer Berkman and Uzi Vishkin.
\newblock Recursive star-tree parallel data structure.
\newblock {\em {SIAM} J. Comput.}, 22(2):221--242, 1993.

\bibitem{DBLP:conf/esa/BrodalBD13}
Gerth~St{\o}lting Brodal, Andrej Brodnik, and Pooya Davoodi.
\newblock The encoding complexity of two dimensional range minimum data
  structures.
\newblock In {\em Algorithms - {ESA} 2013 - 21st Annual European Symposium,
  Sophia Antipolis, France, September 2-4, 2013. Proceedings}, pages 229--240,
  2013.

\bibitem{BDR12}
Gerth~St{\o}lting Brodal, Pooya Davoodi, and S.~Srinivasa Rao.
\newblock On space efficient two dimensional range minimum data structures.
\newblock {\em Algorithmica}, 63(4):815--830, 2012.

\bibitem{DBLP:conf/isaac/BrodalFGL09}
Gerth~St{\o}lting Brodal, Rolf Fagerberg, Mark Greve, and Alejandro
  L{\'{o}}pez{-}Ortiz.
\newblock Online sorted range reporting.
\newblock In {\em {ISAAC} 2009, Proceedings}, pages 173--182, 2009.

\bibitem{Davoodi20130131}
Pooya Davoodi, Gonzalo Navarro, Rajeev Raman, and S.~Srinivasa Rao.
\newblock Encoding range minima and range top-2 queries.
\newblock {\em Philosophical Transactions of the Royal Society of London A:
  Mathematical, Physical and Engineering Sciences}, 372(2016), 2014.

\bibitem{FischerH10}
Johannes Fischer and Volker Heun.
\newblock Finding range minima in the middle: Approximations and applications.
\newblock {\em Mathematics in Computer Science}, 3(1):17--30, 2010.

\bibitem{DBLP:journals/iandc/Frederickson93}
Greg~N. Frederickson.
\newblock An optimal algorithm for selection in a min-heap.
\newblock {\em Inf. Comput.}, 104(2):197--214, 1993.

\bibitem{DBLP:conf/cpm/GawrychowskiN15}
Pawel Gawrychowski and Patrick~K. Nicholson.
\newblock Encodings of range maximum-sum segment queries and applications.
\newblock In {\em Combinatorial Pattern Matching - 26th Annual Symposium, {CPM}
  2015, Ischia Island, Italy, June 29 - July 1, 2015, Proceedings}, pages
  196--206, 2015.

\bibitem{DBLP:conf/icalp/GawrychowskiN15}
Pawel Gawrychowski and Patrick~K. Nicholson.
\newblock Optimal encodings for range top-k k , selection, and min-max.
\newblock In {\em Automata, Languages, and Programming - 42nd International
  Colloquium, {ICALP} 2015, Kyoto, Japan, July 6-10, 2015, Proceedings, Part
  {I}}, pages 593--604, 2015.

\bibitem{DBLP:journals/tcs/GolinIKRSS16}
Mordecai~J. Golin, John Iacono, Danny Krizanc, Rajeev Raman, Srinivasa~Rao
  Satti, and Sunil~M. Shende.
\newblock Encoding 2d range maximum queries.
\newblock {\em Theor. Comput. Sci.}, 609:316--327, 2016.

\bibitem{DBLP:journals/talg/GrossiINRR17}
Roberto Grossi, John Iacono, Gonzalo Navarro, Rajeev Raman, and S.~Srinivasa
  Rao.
\newblock Asymptotically optimal encodings of range data structures for
  selection and top-\emph{k} queries.
\newblock {\em {ACM} Trans. Algorithms}, 13(2):28:1--28:31, 2017.

\bibitem{DBLP:conf/esa/GrossiINRR13}
Roberto Grossi, John Iacono, Gonzalo Navarro, Rajeev Raman, and Srinivasa~Rao
  Satti.
\newblock Encodings for range selection and top-k queries.
\newblock In {\em {ESA} 2013}, pages 553--564, 2013.

\bibitem{DBLP:conf/birthday/He13}
Meng He.
\newblock Succinct and implicit data structures for computational geometry.
\newblock In {\em Space-Efficient Data Structures, Streams, and Algorithms -
  Papers in Honor of J. Ian Munro on the Occasion of His 66th Birthday}, pages
  216--235, 2013.

\bibitem{DBLP:conf/cpm/JoLS16}
Seungbum Jo, Rahul Lingala, and Srinivasa~Rao Satti.
\newblock Encoding two-dimensional range top-k queries.
\newblock In {\em 27th Annual Symposium on Combinatorial Pattern Matching,
  {CPM} 2016, June 27-29, 2016, Tel Aviv, Israel}, pages 3:1--3:11, 2016.

\bibitem{DBLP:conf/isaac/JoS18}
Seungbum Jo and Srinivasa~Rao Satti.
\newblock Encoding two-dimensional range top-k queries revisited.
\newblock In {\em 29th International Symposium on Algorithms and Computation,
  {ISAAC} 2018, December 16-19, 2018, Jiaoxi, Yilan, Taiwan}, pages
  69:1--69:13, 2018.

\bibitem{binomial}
N.D Kazarinoff.
\newblock {\em Geometric inequalities}.
\newblock New York: Random House, 1961.

\bibitem{miltersen-survey}
P.~B. Miltersen.
\newblock Cell probe complexity - a survey.
\newblock {\em FSTTCS}, 1999.

\bibitem{mr-sjc01}
J.~Ian Munro and Venkatesh Raman.
\newblock Succinct representation of balanced parentheses and static trees.
\newblock {\em {SIAM} J. Comput.}, 31(3):762--776, 2001.

\bibitem{DBLP:conf/pods/RahulT15}
Saladi Rahul and Yufei Tao.
\newblock On top-k range reporting in 2d space.
\newblock In {\em Proceedings of the 34th {ACM} Symposium on Principles of
  Database Systems, {PODS} 2015, Melbourne, Victoria, Australia, May 31 - June
  4, 2015}, pages 265--275, 2015.

\bibitem{DBLP:conf/pods/RahulT16}
Saladi Rahul and Yufei Tao.
\newblock Efficient top-k indexing via general reductions.
\newblock In {\em Proceedings of the 35th {ACM} {SIGMOD-SIGACT-SIGAI} Symposium
  on Principles of Database Systems, {PODS} 2016, San Francisco, CA, USA, June
  26 - July 01, 2016}, pages 277--288, 2016.

\end{thebibliography}

\end{document}